\newtheorem{thm}{Theorem}
\theoremstyle{definition}
\newtheorem{lem}{Lemma}
\newtheorem{myDef}{Definition}
\newtheorem{pro}{Proposition}
\begin{document}
\title{Two-Stage Polarization-Based\\Nonbinary Polar Codes for 5G URLLC}

\author{Peiyao~Chen,~Baoming~Bai,~\IEEEmembership{Senior Member,~IEEE},
Xiao~Ma,~\IEEEmembership{Member,~IEEE}
\thanks{
	This work was supported in part by the National Natural Science Foundation of China under Grants 61771364.
	
	P. Chen, and B. Bai are with the State Key Laboratory of Integrated Services Networks, Xidian University, Xi'an 710071, China (E-mail: pychen@stu.xidian.edu.cn, bmbai@mail.xidian.edu.cn).
	
	X. Ma is with the School of Data and Computer Science, Sun Yat-sen University, Guangzhou 510006, China (E-mail: maxiao@mail.sysu.edu.cn).
    }
}

\maketitle

\begin{abstract}
In this paper, we propose a new class of nonbinary polar codes with two-stage polarization for ultra-reliable and low-latency communications (URLLC), where the outer (\textit{symbol-level}) polarization is achieved by using a $2\times 2$ $q$-ary matrix $\left[ {\begin{smallmatrix}
1&0\\
\beta&1
\end{smallmatrix}}\right]$ as the kernel and the inner (\textit{bit-level}) polarization is achieved by a binary polarization matrix for each input symbol. With the two-stage polarization, \textit{bit-level} code construction is introduced, resulting in \textit{partially-frozen symbols}, where the frozen bits in these symbols can be used as \textit{active-check} bits to facilitate the decoder. The encoder/decoder of the proposed codes has the same structure as the original binary polar codes, admitting an easily configurable and flexible implementation, which is an obvious advantage over the existing nonbinary polar codes based on Reed-Solomon (RS) codes. To support high spectral efficiency in URLLC, we also present, in addition to the \textit{single level coded} modulation scheme with field
matched modulation order, a \textit{mixed multilevel coded} modulation scheme with arbitrary modulation order to trade off the latency against complexity. Simulation results show that our proposed nonbinary polar codes exhibit comparable performance with the RS4-based polar codes and outperform binary polar codes with low decoding latency, suggesting a potential application for 5G URLLC.
\end{abstract}

\begin{IEEEkeywords}
Decoding latency, multiplicative repetition, nonbinary polar codes, two-stage polarization, URLLC. 
\end{IEEEkeywords}
\IEEEpeerreviewmaketitle

\section{Introduction}
Binary polar codes~\cite{Arikan2009}, which have been adopted in 5G for the enhanced mobile broadband (eMBB) control channel, are also considered as promising candidates for ultra-reliable and low-latency communications (URLLC). URLLC requires high reliability as well as low latency even down to millisecond (ms) level for communications. However, the latency of the binary polar codes is relatively high due to the serial processing nature. To reduce the latency, many efforts have been made to improve the degree of parallelism in decoding. In this aspect, nonbinary polar codes provide an effective solution to low decoding latency.

In~\cite{Sasoglu2009}, \c{S}a\c{s}o\v{g}lu \textit{et al.} proved that nonbinary polar codes with arbitrary finite input-alphabet sizes can achieve symmetric-capacity by decomposing underlying symbol channels into a set of subchannels with prime input alphabet sizes. They also showed that all discrete memoryless channels (DMCs) can be polarized by randomized constructions.
In 2014, Chiu~\cite{Chiu2014} proposed a new approach, by using channel symbol permutations, proving that polar codes polarize arbitrary $q$-ary input randomized channels. Nonbinary polar codes based on an $\ell\times \ell$ $q$-ary Reed-Solomon (RS) matrix $\mathbf{G}_{RS}(q, \ell)$ were proposed by Mori and Tanaka~\cite{Mori20101,Mori2014}, whose exponent \footnote{The probability of block error for polar codes under SC decoding is found to be $\tiny{\mathcal{O}\left( {{2^{-{N^\beta }}}} \right)}$ for any $\beta<\gamma$, where $\gamma$ denotes the exponent of the kernel matrix~\cite{Sat2010}.} is $\log(\ell!)/(\ell \log \ell)$ for all $\ell \leq q$ and can be arbitrarily close to 1 as $\ell$ becomes large. In 2016, Cheng~\cite{Cheng2016} \textit{et al.} applied a four-dimensional RS matrix (RS4) over the finite field GF(4) as the kernel and showed that better error-correcting performance can be achieved compared with binary polar codes. Moreover, binary and RS4-based nonbinary kernels are mixed in~\cite{Presman2016} to reduce the decoding complexity of RS4-based polar codes.
Although RS-based nonbinary polar codes exhibit outstanding performance by taking the advantage of large exponents, the very different decoding structures for different field sizes (corresponding to matrix sizes) may limit the applications due to their non-universality. In \cite{Gulcu2018}, the code construction method and its complexity for $q$-ary polar codes over the $q$-ary symmetric channel were introduced. In 2018, we introduced the concept of two-stage polarization to construct nonbinary polar codes in terms of bit-level polarization \cite{Chen2018,Chen20182}. Then, Yuan \textit{et al.}~\cite{Yuan2018} constructed nonbinary polar codes with only single polarization.

The objective of this paper is to construct nonbinary polar codes for URLLC with similar structures to the original polar codes. With this, the conventional successive cancellation list (SCL) decoding~\cite{Vardy2011} and cyclic redundancy check (CRC)-aided SCL decoding methods~\cite{Kai2012} can be easily implemented adaptively. Simulations and analysis results show that, in addition to the improvement in error-correcting performance, the proposed nonbinary polar codes can also achieve a low decoding latency for the reason that multiple bits are decoded simultaneously as a symbol.

The main contributions of this work are summarized as follows:

\begin{itemize}
\item A new class of nonbinary polar codes based on two-stage polarization is proposed. Inspired by the proof of nonbinary polarization in~\cite{Sasoglu2009,Chiu2014} and \cite{Mori2014}, for the symbol-stage polarization, an nonbinary polarized matrix based on a $2\times 2$ $q$-ary matrix $\left[ {\begin{smallmatrix}
1&0\\
\beta&1
\end{smallmatrix}}\right]$ is introduced, where $\beta\neq 0$ acts as a multiplier and varies for nested Kronecker operations.
Then, for the bit-stage polarization, the \textit{bit-level} code construction is considered, which is different from existing constructions and will result in some \textit{partially-frozen} symbols, i.e., some symbols containing both frozen bits and unfrozen bits. To introduce the \textit{bit-level} code construction, a linear transformation is applied to each input symbol before the encoding, leading to \textit{bit-level} polarization in a symbol.

\item According to the encoding, an efficient SC and list decoding are introduced for the proposed nonbinary polar codes, which has a similar structure to that of the Ar{\i}kan's polar codes~\cite{Arikan2009}. The analysis of decoding complexity and decoding latency are also given, which shows that low decoding latency can be obtained by the proposed nonbinary polar codes at the expense of decoding complexity.

\item To improve the error-correcting performance, an \textit{active-check-used} method is proposed by using the frozen bits in \textit{partially-frozen} symbols to facilitate SCL decoding. For further improvement, CRC-aided nonbinary polar codes are considered.

\item For high spectral efficiency, the proposed nonbinary polar codes combining with high-order modulations are also investigated, where two coded modulation schemes are introduced: \textit{single level coded modulation} scheme and \textit{mixed multilevel coded modulation} scheme. The first one is designed for the finite-field matched modulation order, resulting in low latency, while the latter can provide lower complexity by the use of smaller fields for arbitrary modulation orders.
\end{itemize}

The remainder of this paper is organized as follows. In Section II, notations and definitions used in this paper are introduced. In Section III, we present the encoding and decoding algorithms of the two-stage polarization-based nonbinary polar codes.
In Section IV, an \textit{active-check-used} method is proposed to improve performance. Moreover, the proposed nonbinary polar codes concatenated with CRC codes are also considered. The decoding complexity and decoding latency are analyzed in Section V. The proposed nonbinary polar codes combined with high-order modulation are introduced in Section VI. Finally, conclusion is drawn in Section VII.

\section{Preliminaries}
Let $\mathbb{F}_q$ be the finite field with $q=p^m$, where $p$ is a prime number and $m$ is a positive integer greater than unity. Assume that the finite field $\mathbb{F}_q$ is generated by a primitive polynomial $f(x)= f_0 +f_1x+\cdots+f_{m-1}x^{m-1} +x^m \in \mathbb{F}_p[x]$. Let $\alpha$ be a root of $f(x)$, i.e., a so-called primitive element of $\mathbb{F}_q$. Then $\alpha^{-\infty}=0,1,\alpha,\alpha^2,\ldots,\alpha^{q-2}$ form all the elements of $\mathbb{F}_q$. Let $\mathbb{F}^*_q$ denote $\mathbb{F}_q\backslash\{0\}$ and $\mathbb{F}_p(\beta)$ denote the field extension of $\mathbb{F}_p$ generated by the adjunction of $\beta \in \mathbb{F}_q$. Similarly, $\mathbb{F}_p(\mathcal{I})$ and $\mathbb{F}_p(\mathbf{F})$ represents the field extension of $\mathbb{F}_p$ generated by the adjunction of all elements of $\mathcal{I}\subseteq \mathbb{F}_q$ and the matrix $\mathbf{F}$ over $\mathbb{F}_q$, respectively. For the $q$-ary channel polarization, according to~\cite{Mori2014}, we have Theorem 1.
\begin{thm}\label{thm1}
Any $q$-ary input channel is polarized by the $2\times 2$ invertible
matrix $\mathbf{F}$ over $\mathbb{F}_q$ if and only if $\mathbb{F}_p(\mathbf{\overline F})=\mathbb{F}_q$, where $\mathbf{\overline F}$ is one of the standard forms \footnote{Lower triangular matrices with unit diagonal elements equivalent to $\mathbf{F}$ are called standard forms of $\mathbf{F}$.} of $\mathbf{F}$.
\end{thm}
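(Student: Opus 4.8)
The plan is to combine the standard polarization martingale with an algebraic characterization of the single-step fixed points, and to show that the field-generation condition is exactly what removes all non-extremal fixed points. First I would reduce to the standard form. Scaling either input by a nonzero element of $\mathbb{F}_q$ merely relabels the channel inputs, and permuting the two inputs only interchanges the roles of the two synthetic channels; neither operation affects whether the process polarizes. Since these operations bring any invertible $\mathbf{F}$ to a lower-triangular unit-diagonal form $\overline{\mathbf{F}} = \left[ {\begin{smallmatrix} 1&0\\ \beta&1 \end{smallmatrix}}\right]$ with $\mathbb{F}_p(\overline{\mathbf{F}}) = \mathbb{F}_p(\beta)$, it suffices to treat this kernel and to relate polarization to whether $\beta$ generates $\mathbb{F}_q$ over $\mathbb{F}_p$.

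Next I would set up the process itself. The kernel sends $(U_1, U_2)$ to the channel inputs $X_1 = U_1 + \beta U_2$, $X_2 = U_2$, yielding the two synthetic channels $W^-$ (recover $U_1$ from $(Y_1, Y_2)$) and $W^+$ (recover $U_2$ from $(Y_1, Y_2, U_1)$). Writing $I_n$ for the symmetric capacity of the channel reached at a uniformly random depth-$n$ node of the recursion, the conservation identity $I(W^-) + I(W^+) = 2\,I(W)$ shows that $I_n$ is a bounded martingale and hence converges almost surely. Polarization is then equivalent to the statement that the limit is supported on $\{0, \log q\}$, which holds if and only if the only channels obeying the fixed-point equation $I(W^-) = I(W) = I(W^+)$ are the useless and the perfect channel. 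The problem thus reduces entirely to characterizing these fixed points.

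The algebraic heart is the observation that multiplication by $\beta$ is an $\mathbb{F}_p$-linear endomorphism of $\mathbb{F}_q$, and that $\mathbb{F}_p(\beta) = \mathbb{F}_q$ holds if and only if this map admits no proper nontrivial invariant $\mathbb{F}_p$-subspace (equivalently, $1, \beta, \ldots, \beta^{m-1}$ span $\mathbb{F}_q$). For the necessity direction I would argue by contrapositive: if $\mathbb{F}_p(\beta) = \mathbb{F}_{p^k} \subsetneq \mathbb{F}_q$, then $\mathbb{F}_{p^k}$ is a proper $\beta$-invariant subspace, and I would exhibit an explicit non-extremal fixed point, namely a channel that reveals the coset of $X$ in $\mathbb{F}_q / \mathbb{F}_{p^k}$ perfectly but is completely ambiguous within each coset. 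Because the combination $X_1 = U_1 + \beta U_2$ respects this coset structure (multiplication by $\beta$ maps $\mathbb{F}_{p^k}$ into itself and so descends to the quotient), the transform fixes such a channel, $I_\infty$ stays bounded away from the extremes, and $W$ does not polarize.

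The hard part is the sufficiency direction: showing that when $\mathbb{F}_p(\beta) = \mathbb{F}_q$ the only fixed points are extremal. The route I would take is to prove a strengthened polarization lemma, namely that the fixed-point equality forces the posterior input distribution given the output to be uniform over a coset of some $\beta$-invariant subspace, and then to invoke the algebraic fact that no proper nontrivial $\beta$-invariant subspace exists. This collapses any candidate fixed point to either a single point (certainty, $I = \log q$) or the whole field (no information, $I = 0$). The delicate step is establishing that the information-theoretic equality genuinely imposes this rigid subspace structure, rather than merely an approximate one; making this implication exact, and correctly tracking how the multiplier $\beta$ propagates the invariant structure through the recursion, is where the main technical effort lies. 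Once this equivalence between extremal fixed points and the generation condition is secured, the almost-sure martingale convergence immediately yields the claimed polarization.
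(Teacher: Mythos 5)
The paper does not actually prove this theorem: it is imported verbatim from the cited reference [Mori--Tanaka, \emph{IEEE Trans.\ Inf.\ Theory} 2014], so there is no in-paper argument to compare against. Measured against the proof in that source, your outline follows essentially the same route: reduction to the standard form $\overline{\mathbf{F}}=\left[\begin{smallmatrix}1&0\\ \beta&1\end{smallmatrix}\right]$, the bounded-martingale convergence of $I_n$, and the identification of the intermediate polarization levels with $\beta$-invariant additive subgroups of $\mathbb{F}_q$, i.e.\ $\mathbb{F}_p(\beta)$-subspaces, which are trivial exactly when $\mathbb{F}_p(\beta)=\mathbb{F}_q$. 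Your necessity direction is sound and complete as sketched: when $\mathbb{F}_p(\beta)=\mathbb{F}_{p^k}\subsetneq\mathbb{F}_q$, the channel that reveals the coset of its input in $\mathbb{F}_q/\mathbb{F}_{p^k}$ and nothing more is genuinely fixed by both transforms (multiplication by $\beta$ descends to the quotient and permutes cosets), so $I_n\equiv (m-k)\log p$ and no polarization occurs.

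The genuine gap is in the sufficiency direction, and it is not where you locate it. You assert that polarization is \emph{equivalent} to the nonexistence of non-extremal exact fixed points of the single-step transform, and you describe the remaining work as ``making this implication exact.'' This is backwards. Almost-sure convergence of the martingale $I_n$ gives only that the increments $I(W^+)-I(W^-)$ of the channels along a random path tend to zero; it does not produce a limiting \emph{channel} satisfying $I(W^-)=I(W)=I(W^+)$ exactly, so a classification of exact fixed points proves nothing by itself. What is actually needed is the quantitative (robust) form of the extremality lemma: for every $\epsilon>0$ there is a $\delta>0$ such that any $q$-ary channel with $I(W^+)-I(W^-)<\delta$ has $I(W)$ within $\epsilon$ of the discrete set $\{\log(q/|H|): H \text{ a } \beta\text{-invariant subgroup}\}$, uniformly over all channels. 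Establishing this uniform continuity statement --- not sharpening an approximate identity into an exact one --- is the technical core of the cited proof (it is the analogue of Ar{\i}kan's Bhattacharyya-parameter argument and of \c{S}a\c{s}o\u{g}lu's extremality lemma for general alphabets). As written, your plan would let you conclude only that exact fixed points are extremal, leaving the passage from ``increments vanish asymptotically'' to ``the limit lies in $\{0,\log q\}$'' unjustified. A secondary loose end: the theorem refers to ``one of the standard forms,'' so you should also note (or cite) that $\mathbb{F}_p(\overline{\mathbf{F}})$ does not depend on which standard form of $\mathbf{F}$ is chosen.
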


From Theorem 1, we have the following proposition.
\begin{pro}\label{pro1}
\textit{Any $q$-ary input channel can be polarized by the matrix $\mathbf{F'}$ over $\mathbb{F}_q$ with $\mathbf{F'}=\mathbf{\overline F}_a\otimes\mathbf{\overline F}_b$, where $\mathbf{\overline F}_a$ and $\mathbf{\overline F}_b$ are $2\times 2$ invertible matrices over $\mathbb{F}_q$ with $\mathbb{F}_p(\mathbf{\overline F}_a)=\mathbb{F}_q$ and $\mathbb{F}_p(\mathbf{\overline F}_b)=\mathbb{F}_q$, respectively, and ``$\otimes$" represents the Kronecker product.}
\end{pro}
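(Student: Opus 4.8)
The plan is to realize the kernel $\mathbf{F'}=\mathbf{\overline F}_a\otimes\mathbf{\overline F}_b$ as a two-stage polarization and then apply Theorem 1 to each stage. I would begin with the elementary observation that, up to a simultaneous permutation of its rows and columns, the Kronecker power $(\mathbf{\overline F}_a\otimes\mathbf{\overline F}_b)^{\otimes n}$ equals $\mathbf{\overline F}_a^{\otimes n}\otimes\mathbf{\overline F}_b^{\otimes n}$ (the permutation merely sorts the tensor factors by type). Since permuting the inputs and outputs of the generator only relabels the synthesized subchannels and leaves their multiset unchanged, polarization of $\mathbf{F'}$ is equivalent to polarization of $\mathbf{\overline F}_a^{\otimes n}\otimes\mathbf{\overline F}_b^{\otimes n}$. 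Indexing the $N=4^n$ channel uses by a pair $(\mathbf{i},\mathbf{j})$ with $\mathbf{i},\mathbf{j}\in\{0,1\}^n$, this generator applies $\mathbf{\overline F}_b^{\otimes n}$ along the $\mathbf{j}$-coordinate (the inner stage) and $\mathbf{\overline F}_a^{\otimes n}$ along the $\mathbf{i}$-coordinate (the outer stage).

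Next I would invoke the standard ``Kronecker-equals-concatenation'' principle for polar transforms: under the nested successive-cancellation order the synthesized channel indexed by $(\mathbf{i},\mathbf{j})$ factors as the $\mathbf{i}$-th channel obtained by applying the $n$-level $\mathbf{\overline F}_a$-synthesis to the $\mathbf{j}$-th channel $W_b^{[\mathbf{j}]}$ produced by the $n$-level $\mathbf{\overline F}_b$-synthesis from $W$. With this factorization in hand the conclusion is immediate: for each \emph{fixed} inner index $\mathbf{j}$, the outer family $\{(W_b^{[\mathbf{j}]})_a^{[\mathbf{i}]}\}_{\mathbf{i}}$ is exactly the set of depth-$n$ synthesized channels of the $q$-ary input channel $W_b^{[\mathbf{j}]}$ under $\mathbf{\overline F}_a$. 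Because $\mathbb{F}_p(\mathbf{\overline F}_a)=\mathbb{F}_q$, Theorem 1 says this family polarizes, so the fraction of indices $\mathbf{i}$ for which $(W_b^{[\mathbf{j}]})_a^{[\mathbf{i}]}$ is non-extremal tends to $0$ as $n\to\infty$. Since this holds for every $\mathbf{j}$, the fraction of non-extremal channels among all $(\mathbf{i},\mathbf{j})$ tends to $0$, which is precisely the statement that $\mathbf{F'}$ polarizes $W$. (The hypothesis $\mathbb{F}_p(\mathbf{\overline F}_b)=\mathbb{F}_q$ gives the symmetric argument and could be used instead, applying Theorem 1 to the inner stage and then noting that extremal inner channels stay extremal under any further synthesis.)

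The step I expect to be the real obstacle is justifying the factorization in the second paragraph rigorously---that is, verifying that the nested successive-cancellation schedule makes the synthesized channel at $(\mathbf{i},\mathbf{j})$ genuinely decompose as an outer synthesis of an inner synthesis, rather than some channel that entangles the two coordinate blocks. This is a statement about the conditional-independence structure that the tensor factorization $\mathbf{\overline F}_a^{\otimes n}\otimes\mathbf{\overline F}_b^{\otimes n}$ imposes on $(u_{(\mathbf{i},\mathbf{j})},y_{(\mathbf{i},\mathbf{j})})$, and it is exactly here that the permutation reduction of the first paragraph is needed. A cleaner, self-contained alternative that sidesteps the all-$n$ factorization is the martingale route: let $I_n$ be the symmetric capacity of a uniformly random depth-$n$ synthesized channel of the kernel $\mathbf{F'}$; invertibility of $\mathbf{F'}$ makes $\{I_n\}$ a bounded martingale, so it converges almost surely and the single $\mathbf{F'}$-step asymptotically produces no capacity spread. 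Since one application of $\mathbf{F'}$ is a two-level $2\times2$ transform (inner $\mathbf{\overline F}_b$, outer $\mathbf{\overline F}_a$), the absence of spread forces each $2\times2$ sub-step to a fixed point, and the field condition in Theorem 1 identifies the only such fixed points as $I\in\{0,\log q\}$.
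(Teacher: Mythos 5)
There is a genuine gap at the very first step, and it propagates through the whole argument. You assert that permuting the rows and columns of the generator ``only relabels the synthesized subchannels and leaves their multiset unchanged,'' and you use this to replace $(\mathbf{\overline F}_a\otimes\mathbf{\overline F}_b)^{\otimes n}$ by $\mathbf{\overline F}_a^{\otimes n}\otimes\mathbf{\overline F}_b^{\otimes n}$. Column permutations are indeed harmless (the channel uses are i.i.d.), but a row permutation changes the successive-cancellation decoding order, i.e., it changes which inputs each synthesized channel gets to condition on, and this does change the multiset. A minimal counterexample: the kernel $\left[\begin{smallmatrix}1&1\\1&0\end{smallmatrix}\right]$, obtained from Ar{\i}kan's kernel by a row swap, synthesizes two channels each equivalent to $W$, not $(W^-,W^+)$. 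Concretely, the shuffle relating $(\mathbf{\overline F}_a\otimes\mathbf{\overline F}_b)^{\otimes n}$ to $\mathbf{\overline F}_a^{\otimes n}\otimes\mathbf{\overline F}_b^{\otimes n}$ converts an SC schedule in which $a$-steps and $b$-steps \emph{alternate} level by level (each application of the $4\times4$ kernel contributes one $2\times2$ $a$-step and one $2\times2$ $b$-step) into a schedule in which all steps of one type precede all steps of the other; since the single-step maps $V\mapsto V_a^{(i)}$ and $V\mapsto V_b^{(j)}$ do not commute, the factorization $(W_b^{[\mathbf{j}]})_a^{[\mathbf{i}]}$ on which your appeal to Theorem 1 ``for each fixed $\mathbf{j}$'' rests does not describe the channels actually synthesized by $\mathbf{F'}$. (Even if it did, you would be applying the asymptotic statement of Theorem 1 to a base channel $W_b^{[\mathbf{j}]}$ that itself varies with $n$, which needs a uniformity argument.) Your martingale fallback is the right repair in spirit, but it cannot be closed by citing Theorem 1 as a black box: you need the \emph{single-step} fact underlying its proof --- that when $\mathbb{F}_p(\mathbf{\overline F}_a)=\mathbb{F}_q$ the $2\times2$ transform has vanishing capacity spread only near extremal channels --- applied to the alternating increments.

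All of this machinery is also unnecessary, and the paper takes a much shorter route: a Kronecker product of lower-triangular matrices with unit diagonal is again lower triangular with unit diagonal, so $\mathbf{F'}$ is its own standard form; and since the diagonal entries of each factor equal $1$, every entry of $\mathbf{\overline F}_a$ and of $\mathbf{\overline F}_b$ appears among the entries of $\mathbf{F'}$, whence $\mathbb{F}_p(\mathbf{F'})\supseteq\mathbb{F}_p(\mathbf{\overline F}_a)\cup\mathbb{F}_p(\mathbf{\overline F}_b)=\mathbb{F}_q$. The polarization criterion of \cite{Mori2014} (the $\ell\times\ell$ generalization of the $2\times 2$ statement quoted as Theorem 1) then applies directly to the $4\times4$ kernel $\mathbf{F'}$. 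If you prefer a genuinely two-stage argument, it must be carried out for the alternating schedule that the kernel actually induces, not for a reordered one.
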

\begin{proof}
Since $\mathbf{\overline F}_a$ and $\mathbf{\overline F}_b$ are standard forms with  $\mathbb{F}_p(\mathbf{\overline F}_a)=\mathbb{F}_q$ and $\mathbb{F}_p(\mathbf{\overline F}_b)=\mathbb{F}_q$, the generated matrix $\mathbf{F'}$ is also a standard form with $\mathbb{F}_p(\mathbf{F'})=\mathbb{F}_q$. According to Theorem \ref{thm1}, Proposition \ref{pro1} is followed.
\end{proof}

In this paper, we consider the finite fields of characteristic 2, i.e., $p=2$.
Then, each element $\alpha^i \in $ $\mathbb{F}_q$ can be represented by a binary vector ${\mathbf{b}}(\alpha^i)$=($b_1, b_2,\ldots, b_{m}$) with $\alpha^i=\sum_{j=1}^{m}b_j\alpha^{j-1}$. Similarly, the $m$-bit vector ($b_1, b_2,\ldots, b_{m}$) can be represented by a $q$-ary symbol by the function $g(b_1, b_2,\ldots, b_{m})=\sum_{j=1}^{m}b_j\alpha^{j-1}$, for example, $g(\mathbf{b}(\alpha^i))=\alpha^i$.
Define the companion matrix~\cite{Lidl1994} of $f(x)$ as \[\mathbf{A}=\left[ {\begin{array}{*{20}{c}}
{0}&1&0&\cdots&0\\
{0}&{0}&1&\cdots&0\\
{\vdots}&{\vdots}&\vdots&\ddots&\vdots\\
{0}&{0}&0&\cdots&1\\
{f_0}&{f_1}&f_2&\cdots&f_{m-1}\\
\end{array}}\right],\]
then $\mathbb{F}_q=\{0,\mathbf{A}^i, 0\leq i\leq q-2\}$ with $\alpha^i \leftrightarrow\mathbf{A}^i$. We call $\mathbf{A}^i$ the \textit{matrix representation} of the element $\alpha^i$. Note that the binary
vector representation of $\beta\alpha^i$ ($\beta\in \mathbb{F}_q$), i.e., $\mathbf{b}(\beta\alpha^i)$, is equal to $\mathbf{b}(\beta)\mathbf{A}^i$. In particular, $\mathbf{b}(\alpha^i)$ is exactly $\mathbf{b}(1)\mathbf{A}^i$.

\begin{myDef}(\textit{Equivalent Binary Matrix})
\label{def1}
Given an $n\times n$ matrix $\mathbf{G}_{n}$ over $\mathbb{F}_q$, we define its equivalent $mn\times mn$ binary matrix $\mathbf{\tilde{G}}_{b}$ by replacing each $q$-ary element of $\mathbf{G}_n$ with its \textit{matrix representation}.
\hfill $\blacksquare$
\end{myDef}

\begin{myDef}(\textit{Linear Transformation of $\beta$})
Given an $m\times m$ invertible matrix $\mathbf{H}_m$ over $\mathbb{F}_2$, let $\mathcal{T}_m(\beta)$ with $\beta\in\mathbb{F}_q$ be equal to the $q$-ary element whose binary vector representation is $\mathbf{b}(\beta)\mathbf{H}_m$, i.e., $\mathcal{T}_m(\beta)=g(\mathbf{b}(\beta)\mathbf{H}_m)$.
\hfill $\blacksquare$
\end{myDef}

Since $\mathbf{H}_m$ is an invertible matrix, we have $\mathcal{T}^{-1}_m(\mathcal{T}_m(\beta)=\beta$. In this paper, we assume that a codeword of a nonbinary polar code over $\mathbb{F}_q$ contains $n$ symbols bearing $K$ information bits. Thus, the code rate is $R=K/N$ with the equivalent code length (in bits) $N=mn$. Other notations used in the paper are shown as follows.\\

\textit{\textbf{Notation}}: Let $\mathcal{B}\subseteq \{1,2,\ldots,N\}$ and $\mathcal{B}^{c}=\{1,2,\ldots,N\}\backslash \mathcal{B}$ denote the index set of unfrozen bits and frozen bits of a nonbinary polar code $\mathcal{C}$, respectively. Similarly, let $\mathcal{A}\subseteq \{1,2,\ldots,n\}$ and $\mathcal{A}^{c}=\{1,2,\ldots,n\}\backslash \mathcal{A}$ denote the index set of unfrozen symbols and frozen symbols of $\mathcal{C}$, respectively, where for any $i\in\mathcal{A}^c$, the corresponded index $(i,j)=m(i-1)+j$ with any $1\leq j\leq m$ belongs to the frozen bits index set, i.e., $(i,j) \in \mathcal{B}^c$ for all $i\in\mathcal{A}^c$ and $1\leq j\leq m$. Let $\pi:\{0,1,\ldots,n-1\}\rightarrow\{0,1,\ldots,n-1\}$ be the bit-reversal permutation that maps an index $i$ with binary representation $(b_{1},\ldots,b_{r})$ into the index $\pi(i)$ with binary representation $(b_{r},\ldots,b_{1})$, where $r=\log_2n$. Assume that $wt(i)$ denotes the number of ones in the binary expansion of $i$. Denote the sequence $(u_1,\ldots,u_{n})$ by ${\mathbf{u}}_1^{n}$ over $\mathbb{F}_q$ and its binary vector representation by $\mathbf{b}(\mathbf{u}_1^{n})=(\mathbf{b}({u}_1),\ldots,\mathbf{b}({u}_n))$. Suppose that the superscript $T$ stand for the transpose of a vector. 
Let ``$\oplus$'' represent the addition modulo 2 and ``$+$" represent the addition over the finite field or real field, depending on the two addends, i.e., if the two addends belong to $\mathbb{F}_q$, then ``$+$" represents the addition over $\mathbb{F}_q$.

\section{Two-Stage Polarization based Nonbinary Polar Codes}
In this section, we first introduce the generator matrix, which is constructed by multiplicative repetitions for \textit{symbol-level} polarization. Then, the \textit{bit-level} polarization is performed by using a linear transformation for each input symbol, where the invertible matrix is a binary polarization matrix. Based on the two-stage polarization, \textit{bit-level} code construction is provided via the genie-aided symbol-based SC decoding method. According to the code construction, the corresponding decoding algorithm is also given. The construction processing as well as the encoder structure of two-stage polarization-based nonbinary polar codes is shown in Fig. \ref{fig:1}.
\begin{figure*}[!t]
	\centering
	\includegraphics[width=5.2in]{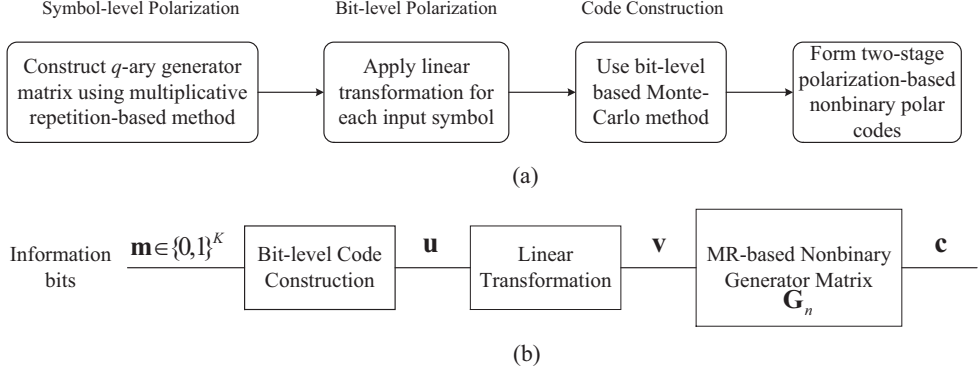}
	\caption{The construction processing and encoder structure of two-stage polarization-based nonbinary polar codes.}
	\label{fig:1}
\end{figure*}
\subsection{Multiplicative Repetition-based Matrix for Symbol-level}
We extend binary polar codes to nonbinary polar codes by considering a
$2\times2$ $q$-ary kernel given by \[\mathbf{F}=\left[{\begin{array}{*{20}{c}}\tiny
1&0\\
{\alpha^i}&1
\end{array}}\right],\]
where $\alpha^i\in \mathbb{F}^*_q$. With $\mathbf{F}$ as the generator matrix, for each ${\mathbf{u}}_1^2\in$ $\mathbb{F}^2_q$, we have the coded sequence  ${\mathbf{c}}_1^2={\mathbf{u}}_1^2\mathbf{F}=(u_1+\alpha^i u_2, u_2)$. In the case of $\alpha^i=1$, $\mathbf{F}$ has the same form as Ar{\i}kan's, and $u_2$ is repeated once and then superimposed on $u_1$.
In the nonbinary case with $\alpha^i\neq 1$, $\alpha^i u_2$ is a multiplicative repetition of $u_2$. For this reason, we call this \textit{multiplicative repetition} (MR)-based matrix for convenience.

Let $\mathbf{F}^{\otimes 0}\triangleq[1]$ and $\alpha ^{{i_r}}$ represent the multiplier for the $r$-th Kronecker operation. The generator matrix with the code length $n=2^r$ is given by
\begin{equation}
{\mathbf{G}_n}=\mathbf{F}^{\otimes r}=\left[ {\begin{array}{*{20}{c}}
	{\mathbf{F}^{\otimes(r-1)}}&0\\
	{{\alpha ^{{i_r}}}{\mathbf{F}^{\otimes(r-1)}}}&{{\mathbf{F}^{\otimes(r-1)}}}
	\end{array}} \right],~r\geq 1,
\end{equation}
where all multipliers $\alpha ^{{i_j}}$ ($1\leq j\leq r$) are nonzero elements in $\mathbb{F}_q^*$. Fig. \ref{fig:2} shows a Forney-style factor graph of $\mathbf{G}_8$.
\begin{figure}[t]
	\centering
	\includegraphics[width=3.36in]{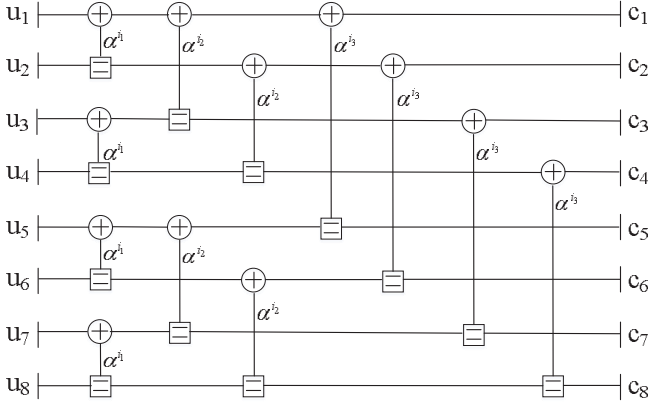}
	\caption{Forney-style factor graph of $\mathbf{G}_8$.}
	\label{fig:2}
\end{figure}

\begin{pro}\label{pro2}
Any $q$-ary input-channels can be polarized by the proposed matrix $\mathbf{G}_n$ over $\mathbb{F}_q$ with the multiplier
\begin{equation}
\alpha^{i_j}=\alpha^{2^{j-1}},~~1\leq j\leq r.
\end{equation}
\end{pro}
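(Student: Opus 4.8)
The plan is to reduce the polarization of the whole generator matrix $\mathbf{G}_n$ to a field-generation condition on each individual Kronecker level, and then settle that condition by recognizing the prescribed multipliers as Frobenius conjugates of $\alpha$. First I would read off from the recursion that $\mathbf{G}_n$ factors as the nested Kronecker product $\mathbf{G}_n=\mathbf{F}_r\otimes\mathbf{F}_{r-1}\otimes\cdots\otimes\mathbf{F}_1$, where the $j$-th kernel is $\mathbf{F}_j=\left[\begin{smallmatrix}1&0\\\alpha^{i_j}&1\end{smallmatrix}\right]$ with multiplier $\alpha^{i_j}=\alpha^{2^{j-1}}$. Each $\mathbf{F}_j$ is already lower triangular with unit diagonal, so it coincides with its own standard form $\overline{\mathbf{F}}_j$, and since the only nontrivial entry is the off-diagonal multiplier, $\mathbb{F}_2(\overline{\mathbf{F}}_j)=\mathbb{F}_2(\alpha^{2^{j-1}})$.

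Next I would invoke Proposition \ref{pro1}, extended from two factors to all $r$ factors by a straightforward induction: if $\overline{\mathbf{F}}_j$ is a standard form with $\mathbb{F}_2(\overline{\mathbf{F}}_j)=\mathbb{F}_q$ for every $j$, then their Kronecker product $\mathbf{G}_n$ is again a standard form with $\mathbb{F}_2(\mathbf{G}_n)=\mathbb{F}_q$, and hence polarizes by Theorem \ref{thm1}. This reduces the proposition to verifying that $\mathbb{F}_2(\alpha^{2^{j-1}})=\mathbb{F}_q$ for all $1\leq j\leq r$.

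The crux is a Galois-theoretic observation about the prescribed exponents. Because $\alpha$ is a root of the degree-$m$ primitive polynomial $f(x)$, its minimal polynomial over $\mathbb{F}_2$ has degree $m$, so $\mathbb{F}_2(\alpha)=\mathbb{F}_{2^m}=\mathbb{F}_q$. The Frobenius map $x\mapsto x^2$ is an automorphism of $\mathbb{F}_q$ fixing $\mathbb{F}_2$ pointwise, so each multiplier $\alpha^{2^{j-1}}$ is a Galois conjugate of $\alpha$ (with the exponent read modulo $q-1$ when $j-1\geq m$, which merely cycles through the same conjugates). Conjugate elements share a minimal polynomial and therefore generate the same extension, giving $\mathbb{F}_2(\alpha^{2^{j-1}})=\mathbb{F}_2(\alpha)=\mathbb{F}_q$ for every $j$; feeding this into the reduction above completes the argument.

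I expect the main obstacle to lie not in the number theory, which is clean once the exponents are identified with the Frobenius orbit of $\alpha$, but in justifying the two reduction steps carefully: that $\mathbb{F}_2(\overline{\mathbf{F}}_j)$ is exactly the field generated by the off-diagonal entry, and that Proposition \ref{pro1} genuinely extends to an $r$-fold Kronecker product while preserving both the standard-form property and the equality $\mathbb{F}_2(\mathbf{G}_n)=\mathbb{F}_q$. As a sanity check on the latter, one can note that with $\alpha^{i_j}=\alpha^{2^{j-1}}$ the entries of $\mathbf{G}_n$ are precisely the powers $\alpha^{\sum_{k\in S}2^{k-1}}$ over subsets $S\subseteq\{1,\dots,r\}$, i.e.\ a full run of powers $\alpha^0,\alpha^1,\dots,\alpha^{2^r-1}$, which already contains $\alpha$ and hence generates $\mathbb{F}_q$.
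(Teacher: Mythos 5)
Your proposal is correct and follows essentially the same route as the paper: the paper's proof likewise observes that each multiplier $\alpha^{2^{j-1}}$ is a root of the primitive polynomial $f(x)$ (i.e.\ a Frobenius conjugate of $\alpha$), hence generates $\mathbb{F}_q$, and then invokes Proposition \ref{pro1} extended across the nested Kronecker levels. Your write-up merely makes explicit the $r$-fold induction and the modular cycling of exponents that the paper leaves implicit.
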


\begin{proof}
It can be seen that the multiplier $\alpha^{i_j}$ for any $1\leq j\leq r$ is a root of the primitive polynomial $f(x)$ over $\mathbb{F}_q$, then $\mathbb{F}_2(\alpha^{i_j})=\mathbb{F}_q$. By Proposition \ref{pro1}, we have the claim.
\end{proof}
With the advantage of similar structures to binary polar codes, the $i$-th ($1\leq i\leq n$) row-weight of the proposed nonbinary polar code also equals $wt(i-1)$. Inspired by the lemma in \cite{Zhang2017} for binary cases, we extend it to the proposed $q$-ary polar codes.

\begin{lem}
\label{lem1}
For a sequence ${\mathbf{u}}_1^{n}\in\{({\mathbf{u}}_1^{i-1}=\mathbf{0},{\mathbf{u}}_i\in \mathbb{F}_q^*,{\mathbf{u}}_{i+1}^{n}\in \mathbb{F}_q^{n-1-i})\}$ with $i=1,\ldots,n$, the weight of the coded sequence $d(\mathbf{c}_1^{n})$ with $\mathbf{c}_1^{n}={\mathbf{u}}_1^{n}\mathbf{G}_n$ is greater than or equal to the $i$-th row-weight in $\mathbf{G}_n$, i.e., $d(\mathbf{c}_1^{n})\geq 2^{wt(i-1)}$.
\end{lem}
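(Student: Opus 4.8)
The plan is to induct on $r$ (equivalently on the length $n=2^r$), exploiting the recursive two-by-two block form of $\mathbf{G}_n$ given in~(1). In the base case $r=0$ ($n=1$) the matrix is $[1]$, the only admissible leading-nonzero index is $i=1$, and the codeword equals $u_1\neq 0$, whose weight $1$ equals $2^{wt(0)}$; this is immediate. For the inductive step I would assume the statement for $\mathbf{G}_{n/2}=\mathbf{F}^{\otimes(r-1)}$ and every admissible leading-nonzero position, and establish it for $\mathbf{G}_n$.

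First I would split the input $\mathbf{u}_1^n=(\mathbf{v},\mathbf{w})$ into halves $\mathbf{v}=\mathbf{u}_1^{n/2}$ and $\mathbf{w}=\mathbf{u}_{n/2+1}^n$, and use the block structure to write the codeword as $\mathbf{c}_1^n=(\mathbf{a}+\alpha^{i_r}\mathbf{b},\,\mathbf{b})$, where $\mathbf{a}=\mathbf{v}\mathbf{G}_{n/2}$ and $\mathbf{b}=\mathbf{w}\mathbf{G}_{n/2}$. Since Hamming weight is additive over the two halves, $d(\mathbf{c}_1^n)=d(\mathbf{a}+\alpha^{i_r}\mathbf{b})+d(\mathbf{b})$. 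The argument then splits on whether the leading nonzero index $i$ falls in the first half ($1\le i\le n/2$) or the second ($n/2< i\le n$).

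In the first-half case, $\mathbf{v}$ carries the leading nonzero at position $i$ with $\mathbf{w}$ unconstrained, so the induction hypothesis applied to $\mathbf{G}_{n/2}$ gives $d(\mathbf{a})\ge 2^{wt(i-1)}$. The key step is to combine the Hamming triangle inequality with the fact that scaling by the nonzero multiplier preserves support, i.e. $d(\alpha^{i_r}\mathbf{b})=d(\mathbf{b})$: this yields $d(\mathbf{a}+\alpha^{i_r}\mathbf{b})\ge d(\mathbf{a})-d(\mathbf{b})$, whence $d(\mathbf{c}_1^n)\ge\bigl(d(\mathbf{a})-d(\mathbf{b})\bigr)+d(\mathbf{b})=d(\mathbf{a})\ge 2^{wt(i-1)}$, as required. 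In the second-half case, $\mathbf{u}_1^{i-1}=\mathbf{0}$ forces $\mathbf{v}=\mathbf{0}$, hence $\mathbf{a}=\mathbf{0}$ and $\mathbf{c}_1^n=(\alpha^{i_r}\mathbf{b},\mathbf{b})$, so $d(\mathbf{c}_1^n)=2\,d(\mathbf{b})$. Applying the hypothesis to $\mathbf{w}$, whose leading nonzero sits at position $i-n/2$, gives $d(\mathbf{b})\ge 2^{wt(i-n/2-1)}$, so $d(\mathbf{c}_1^n)\ge 2^{\,wt(i-n/2-1)+1}$. It then remains to verify the combinatorial identity $wt(i-1)=wt(i-n/2-1)+1$ valid whenever $n/2< i\le n$: since $n/2=2^{r-1}$, the integer $i-1$ lies in $[2^{r-1},2^r-1]$, so its most significant bit is set and $wt(i-1)=1+wt(i-1-2^{r-1})$, which matches the bound.

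The main obstacle I expect is the first-half case, where the free tail $\mathbf{w}$ could in principle cancel nonzero coordinates of $\mathbf{a}$ in the left block; the resolution is that any such cancellation is exactly offset by the surviving coordinates of $\mathbf{b}$ in the right block, which is precisely what the triangle-inequality bookkeeping records. The only nonbinary-specific subtlety is ensuring the multiplier $\alpha^{i_r}\in\mathbb{F}_q^*$ does not alter weights, which holds because $\mathbb{F}_q^*$-scaling is support-preserving; with this in hand, everything else carries over verbatim from the binary Reed--Muller-type argument of~\cite{Zhang2017}.
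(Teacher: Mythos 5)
Your proof is correct and follows essentially the same route as the paper's: induction on the Kronecker recursion of $\mathbf{G}_n$, splitting on whether the leading nonzero index lies in the first or second half, doubling the weight in the second-half case, and lower-bounding the first block's weight in the first-half case (you phrase this as the reverse triangle inequality, the paper as a count of cancelling positions, but these are the same estimate). Your explicit verification of $wt(i-1)=wt(i-n/2-1)+1$ and your case split at $i\le n/2$ versus $i>n/2$ are slightly more careful than the paper's, but the argument is the same.
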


\begin{proof}
For the case that $n=2$, the generator matrix is 
$\mathbf{G}_2=\mathbf{F}=\left[{\begin{array}{*{20}{c}}\tiny
1&0\\
{\alpha^{i_1}}&1
\end{array}}\right]$ and one can check that this lemma is true.

Assume that the lemma is true for $n=2^{r-1}$ with $r\geq 2$. For the case of $n=2^r$, we have \[\mathbf{G}_n=\mathbf{F}^{\otimes r}=\left[{\begin{array}{*{20}{c}}\tiny
\mathbf{G}_{n/2}&\mathbf{0}\\
{\alpha}^{i_r}\mathbf{G}_{n/2}&\mathbf{G}_{n/2}
\end{array}}\right]\] and $\mathbf{u}_1^{n}=(\mathbf{u}_1^{n/2},\mathbf{u}_{n/2+1}^{n})$.
Thus, $\mathbf{c}_1^{n}=(\mathbf{u}_1^{n/2}\mathbf{G}_{n/2}+\mathbf{u}_{n/2+1}^{n}{\alpha}^{i_r}\mathbf{G}_{n/2}, \mathbf{u}_{n/2+1}^{n}\mathbf{G}_{n/2})$.
\begin{itemize}
\item When $i\geq n/2$, we have $\mathbf{c}_1^{n}=(\mathbf{u}_{n/2+1}^{n}{\alpha}^{i_r}\mathbf{G}_{n/2},$ $\mathbf{u}_{n/2+1}^{n}\mathbf{G}_{n/2})$. Due to $\alpha^{i_r}\in \mathbb{F}^*$, thus $d(\mathbf{u}_{n/2+1}^{n}{\alpha}^{i_r}\mathbf{G}_{n/2})=d(\mathbf{u}_{n/2+1}^{n}\mathbf{G}_{n/2})$, resulting in $d(\mathbf{c}_1^{n})=2d(\mathbf{c}_1^{n/2})$. From the assumption that $d(\mathbf{c}_1^{n/2})\geq 2^{wt(i-1)}$ with $1\leq i\leq n/2$, we have $d(\mathbf{c}_1^{n})\geq 2\times2^{wt(i-n/2-1)}=2^{wt(i-1)}$ with $n/2+1\leq i\leq n$.
\item When $i< n/2$, we have $d(\mathbf{c}_1^{n})=d(\mathbf{u}_1^{n/2}\mathbf{G}_{n/2}+\mathbf{u}_{n/2+1}^{n}{\alpha}^{i_r}\mathbf{G}_{n/2})+d(\mathbf{u}_{n/2+1}^{n}\mathbf{G}_{n/2})$. Suppose that there are $s$ identical non-zero elements in $\mathbf{u}_1^{n/2}\mathbf{G}_{n/2}$ and $\mathbf{u}_{n/2+1}^{n}{\alpha}^{i_r}\mathbf{G}_{n/2}$ at the same position simultaneously. Then, $d(\mathbf{c}_1^{n})=d(\mathbf{u}_1^{n/2}\mathbf{G}_{n/2})+2d(\mathbf{u}_{n/2+1}^{n}\mathbf{G}_{n/2})-2s$. Clearly,
$s\leq d(\mathbf{u}_{n/2+1}^{n}\mathbf{G}_{n/2})$, thus we have $d(\mathbf{c}_1^{n})\geq d(\mathbf{u}_1^{n/2}\mathbf{G}_{n/2})\geq 2^{wt(i-1)}$ with $1\leq i\leq n/2$.
\end{itemize}
Thus, we complete the proof of Lemma 1.
\end{proof}

\begin{thm}
The minimum (symbol-wise) Hamming distance of the proposed nonbinary polar code $\mathcal{C}$ is given by $d_{\text{min}}(\mathcal{C})=\min_{i\in \mathcal{A}}2^{wt(i-1)}$.
\end{thm}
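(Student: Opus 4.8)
The plan is to exploit linearity: since $\mathcal{C}$ is an $\mathbb{F}_q$-linear code, its minimum symbol-wise Hamming distance equals the smallest Hamming weight among all nonzero codewords, so it suffices to establish the two matching inequalities $d_{\text{min}}(\mathcal{C})\le \min_{i\in\mathcal{A}}2^{wt(i-1)}$ and $d_{\text{min}}(\mathcal{C})\ge \min_{i\in\mathcal{A}}2^{wt(i-1)}$. Every codeword has the form $\mathbf{c}_1^{n}=\mathbf{u}_1^{n}\mathbf{G}_n$ where the input symbols satisfy $u_i=0$ for each frozen index $i\in\mathcal{A}^c$; hence the support of $\mathbf{u}_1^{n}$ is contained in $\mathcal{A}$, and this single constraint is what ties the weight bound to $\mathcal{A}$.

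For the lower bound I would take an arbitrary nonzero codeword and let $i^\ast$ be the smallest index with $u_{i^\ast}\neq 0$. Because the support of $\mathbf{u}_1^{n}$ lies in $\mathcal{A}$, necessarily $i^\ast\in\mathcal{A}$, and by the choice of $i^\ast$ we have $\mathbf{u}_1^{i^\ast-1}=\mathbf{0}$ with $u_{i^\ast}\in\mathbb{F}_q^\ast$ and the remaining symbols arbitrary. This is precisely the hypothesis of Lemma \ref{lem1}, which therefore yields $d(\mathbf{c}_1^{n})\ge 2^{wt(i^\ast-1)}\ge\min_{i\in\mathcal{A}}2^{wt(i-1)}$. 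As the codeword was arbitrary, this gives $d_{\text{min}}(\mathcal{C})\ge\min_{i\in\mathcal{A}}2^{wt(i-1)}$.

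For the upper bound (achievability) I would exhibit a codeword meeting the bound. Let $i_0\in\mathcal{A}$ attain the minimum of $2^{wt(i-1)}$ over $\mathcal{A}$, and choose the input with $u_{i_0}=c$ for some $c\in\mathbb{F}_q^\ast$ and all other symbols zero; this is a legitimate codeword since its support $\{i_0\}\subseteq\mathcal{A}$. Then $\mathbf{c}_1^{n}=c\cdot(\text{the $i_0$-th row of }\mathbf{G}_n)$, and because multiplication by the nonzero scalar $c$ leaves each entry's zero/nonzero status unchanged, this codeword has weight exactly the $i_0$-th row weight, namely $2^{wt(i_0-1)}=\min_{i\in\mathcal{A}}2^{wt(i-1)}$. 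Combining the two inequalities then gives the claimed equality.

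The heavy lifting is already supplied by Lemma \ref{lem1}, so the residual work is mostly bookkeeping; the one point deserving care is the achievability step, where I must confirm both that the minimizing row is realizable by a valid (support-in-$\mathcal{A}$) input and that scaling by a nonzero field element preserves Hamming weight exactly, so that the lower bound is genuinely tight. If the code additionally applies the invertible bit-level transformation $\mathcal{T}_m$ to each symbol before encoding, I would simply remark that $\mathcal{T}_m$ maps nonzero symbols to nonzero symbols and fixes $0$, hence does not alter the symbol support, and the argument carries over unchanged.
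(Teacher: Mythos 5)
Your proposal is correct and follows essentially the same route as the paper: the lower bound comes from Lemma \ref{lem1} applied to the first nonzero input symbol (which must lie in $\mathcal{A}$), and the upper bound from the fact that a minimizing row of $\mathbf{G}_n$ indexed by $\mathcal{A}$ is itself (up to a nonzero scalar) a codeword of weight $2^{wt(i_0-1)}$. Your treatment is in fact slightly more careful than the paper's, which merely asserts the upper bound without exhibiting the achieving codeword, and your closing remark that the invertible per-symbol map $\mathcal{T}_m$ preserves symbol supports correctly disposes of the only extra wrinkle introduced by the two-stage construction.
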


\begin{proof}
On the one hand, $d_{\text{min}}(\mathcal{C})$ cannot be larger than the minimum row-weight of the generator matrix, i.e., $d_{\text{min}}(\mathcal{C})\leq\min_{i\in \mathcal{A}}2^{wt(i-1)}$. On the other hand, any sequence ${\mathbf{u}}_1^{n}$ excluding all-zero vector belongs to $\{({\mathbf{u}}_1^{i-1}=\mathbf{0},{\mathbf{u}}_i\in \mathbb{F}_q^*,{\mathbf{u}}_{i+1}^{n}\in \mathbb{F}_q^{n-1-i})\}$ with $i=1,\ldots,n$. From Lemma \ref{lem1}, for $\mathbf{c}_1^{n}$ $\text{with}~u_{i\in\mathcal{A}}\in\mathbb{F}_q,{u}_{i\in\mathcal{A}^c}=\mathbf{0}$, we have $d(\mathbf{c}_1^{n})\geq \min_{i\in \mathcal{A}}\{2^{wt(i-1)}\}$.
Since the nonbinary polar codes with $\mathbf{G}_{n}$ are linear codes, thus $d_{min}(\mathcal{C})\geq \min_{i\in \mathcal{A}}\{2^{wt(i-1)}\}$.
Therefore, we have $d_{\text{min}}(\mathcal{C})=\min_{i\in \mathcal{A}}2^{wt(i-1)}$.	
\end{proof}	
From the above discussion, we see that, for a given $\mathcal{A}$, the choices of nonzero multipliers do not affect the minimum Hamming distance. However, we need point out that the choices do have impact on the weight distribution, which in turn affect the error-correcting performance.
In this paper, the multipliers are selected according to Proposition \ref{pro2}, and we refer to nonbinary polar codes constructed by MR-based matrix as MR-based nonbinary polar codes.

\subsection{Linear Transformation Construction}
In the pioneer works~\cite{Park2012, Wu2016}, multiple polarization exists for $q=2^r$ polar codes showing multiple capacities from $1$ to $r$ bits in a symbol. To make full use of this phenomenon and intensify the bit-level polarization, a linear transformation (LT) is applied to each input symbol $\mathbf{u}_1^n$ before nonbinary polar encoding, resulting in symbols $\mathbf{v}_1^n$ with $v_i=\mathcal{T}_m(u_i)\in \mathbb{F}_q$ ($1\leq i\leq n$). To the best of our knowledge, this is the first attempt to construct nonbinary polar codes in terms of bit-level polarization.
From Definitions 1 and 2, we have $\mathbf{c}_1^n=(\mathcal{T}_m(u_1),\ldots,\mathcal{T}_m(u_n))\mathbf{G}_n$ and $\mathbf{b}(\mathbf{c}_1^n)=\mathbf{b}(\mathbf{u}_1^n)\mathcal{T}_m(\mathbf{\tilde{G}}_b)$, where $\mathcal{T}_m(\mathbf{\tilde{G}}_b)$ represents the $mn\times mn$ binary generator matrix obtained by replacing each $\alpha^i$ of $\mathbf{G}_n$ with the binary matrix $\mathbf{H}_m\mathbf{A}^i$.
The matrix $\mathbf{H}_m$ is an $m\times m$ binary polarization matrix chosen from~\cite{Lin2015}. The matrices $\mathbf{H}_m$ used in this paper for different $q$ are shown in Appendix A. With the given $\mathbf{H}_m$, the bit-level polarization phenomenon holds for the equivalent generator matrix $\mathcal{T}_m(\mathbf{\tilde{G}}_b)$, since it satisfies that none of its column permutation is an upper triangle matrix~\cite{Lin2015}. Specifically, we have the following propositions.

Let $W:\mathbb{F}_q\rightarrow\mathcal{Y}$ be a $q$-ary DMC and $W^n:\mathbb{F}_q^n\rightarrow\mathcal{Y}^n$ denote the vector channel defined by $W^n(\mathbf{y}_1^n|\mathbf{c}_1^n)=\prod_{i=1}^{n}W(y_i|c_i)$ with $c_i\in\mathbb{F}_q$.
Similar to the binary case, we say that a pair of $q$-ary input channels $W': \mathbb{F}_q\rightarrow\mathcal{Y}^2$ and $W'': \mathbb{F}_q\rightarrow\mathcal{Y}^2\times \mathbb{F}_q$ are obtained by combining two independent copies of $W$ with $c_1=v_1+\alpha v_2$ (the multiplier is selected from (1) to ensure polarization) and $c_2=v_2$ as the input to each copy, where
\begin{equation}
\begin{split}
&W'(y_1,y_2|v_1)=\frac{1}{q}\sum_{v_2\in\mathbb{F}_q}W(y_1|v_1+\alpha v_2)W(y_2|v_2)\\
&W''(y_1,y_2,v_1|v_2)=\frac{1}{q}W(y_1|v_1+\alpha v_2)W(y_2|v_2).
\end{split}
\end{equation}
By the chain rule, we have $I(W')+I(W'')=2I(W)$. Furthermore, we also have $I(W')\leq I(W'')$ with equality if $I(W)$ equals $0$ or $r$.

Consider bit-level processing and assume that each $c_i$ is transmitted over a set of $m$ independent binary input channels. Let $\tilde W: \mathbb{F}_2\rightarrow\mathcal{\tilde Y}$ be a binary DMC and $f:(\tilde y_{i,1},\ldots,\tilde y_{i,m})\in\mathcal{\tilde Y}^m \mapsto y_i\in\mathcal{Y}$ be a one-to-one mapping, then we have 
\[W(y_i|c_i)=W(f(\tilde{y}_{i,1},\ldots,\tilde{y}_{{i,m}})|\mathbf{b}(c_i))=\prod_{j=1}^{m}\tilde W(\tilde{y}_{i,j}|c_{i,j}),\] 
i.e., $W\rightarrow\underbrace{(\tilde W,\ldots,\tilde W)}_\text{\emph{m}}$. From $v_i=\mathcal{T}_m(u_i)$, we have $(c_{1,1},\ldots,c_{1,m})=\mathbf{b}(v_1+\alpha v_2)=\mathbf{b}(\mathcal{T}_m(u_1)+\alpha \mathcal{T}_m(u_2))$ and $(c_{2,1},\ldots,c_{2,m})=\mathbf{b}(v_2)=\mathbf{b}(\mathcal{T}_m(u_2))$. Consider $2m$ independent copies of $\tilde W$ with $c_{1,1},\ldots,c_{1,m},c_{2,1},\ldots,c_{2,m}$ as the input to each copy. Then
let us specify the channels as follows:
\begin{equation*}
\left\{{
\begin{aligned}
&{W'_{1}:\mathbb{F}_2\rightarrow\mathcal{Y}^2}\\
&{W'_{2}:\mathbb{F}_2\rightarrow\mathcal{Y}^2\times\mathbb{F}_2}\\
&{~~~~~~~~~~~\vdots}\\
&{W'_{m}:\mathbb{F}_2\rightarrow\mathcal{Y}^2\times\mathbb{F}_2^{m-1}}\\
&{W''_{1}:\mathbb{F}_2\rightarrow\mathcal{Y}^2\times\mathbb{F}_2^m}\\
&{W''_{2}:\mathbb{F}_2\rightarrow\mathcal{Y}^2\times\mathbb{F}_2^{m+1}}\\
&{~~~~~~~~~~~\vdots}\\
&{W''_{m}:\mathbb{F}_2\rightarrow\mathcal{Y}^2\times\mathbb{F}_2^{2m-1}}\\
\end{aligned}
}\right.,
\end{equation*}
i.e., $(W, W)\rightarrow \underbrace{(\tilde W,\ldots,\tilde W)}_\text{2\emph{m}}\rightarrow (W'_{1},\ldots, W'_{m}, W''_{1},\ldots,\\ W''_{m})\rightarrow (W', W'')$.

For example, consider $q=4$ and $\mathbf{H}_2=\left[ {\begin{smallmatrix}
1&0\\
{1}&{1}
\end{smallmatrix}}\right]$ over $\mathbb{F}_2$, we have $\mathbf{b}(v_i)=(v_{i,1},v_{i,2})=\mathbf{b}(u_i)\mathbf{H}_2=(u_{i,1}\oplus u_{i,2}, u_{i,2})$, i.e., $v_i=g(u_{i,1}\oplus u_{i,2}, u_{i,2})$ ($i=1,2$),
then the channels $W'_{1}$, $W'_{2}$, $W''_{1}$ and $W''_{2}$ is defined as
\begin{equation}
\begin{aligned}
&W'_{1}(y_1,y_2|u_{1,1})=W'_{1}(f(y_{1,1},y_{1,2}),f(y_{2,1},y_{2,2})|u_{1,1})\\
&~~~~~~~~~~~~~~~~~~~=\frac{1}{2^3}\sum_{\substack{u_{1,2}\in\mathbb{F}_2\\u_{2,1}\in\mathbb{F}_2\\u_{2,2}\in\mathbb{F}_2}}W(f(y_{1,1},y_{1,2})|g(u_{1,1}\oplus u_{1,2},\\
&~~~~~~~~~~~~~~~~~~~~~~~~u_{1,2})+\alpha g(u_{2,1}\oplus u_{2,2}, u_{2,2}))W(f(y_{2,1},\\
&~~~~~~~~~~~~~~~~~~~~~~~~y_{2,2})|g(u_{2,1}\oplus u_{2,2}, u_{2,2}))\\
&~~~~~~~~~~~~~~~~~~~=\frac{1}{2^3}\sum_{u_{1,2}\in\mathbb{F}_2}\sum_{v_{2}\in\mathbb{F}_4}W(y_1|v_1+ \alpha v_2)W(y_2|v_2)\\
&~~~~~~~~~~~~~~~~~~~=\frac{1}{2}\sum_{u_{1,2}\in\mathbb{F}_2}W'(y_1,y_2|v_1)\\
&W'_{2}(y_1,y_2,u_{1,1}|u_{1,2})=\frac{1}{2}W'(y_1,y_2|v_1)\\
&W''_{1}(y_1,y_2,v_1|u_{2,1})=W'_{1}(f(y_{1,1},y_{1,2}),f(y_{2,1},y_{2,2}),v_1|u_{2,1})\\
&~~~~~~~~~~~~~~~~~~~~~~~=\frac{1}{2}\sum_{u_{2,2}\in\mathbb{F}_2}W''(y_1,y_2,v_1|v_2)\\
&W''_{2}(y_1,y_2,v_1,u_{2,1}|u_{2,2})=\frac{1}{2}W''(y_1,y_2,v_1|v_2).
\end{aligned}
\end{equation}

\begin{pro}
Suppose $(\tilde W,\tilde W,\tilde W,\tilde W)\rightarrow(W'_{1},W'_{2},$
$W''_{1},W''_{2})$
for a set of binary-input channels and $(W,W)\rightarrow(W',W'')$ for a set of quaternary-input channels. Then
\begin{equation}
{I(W'_{1})+I(W'_{2})=I(W')}
\end{equation}
\begin{equation}
{I(W''_{1})+I(W''_{2})=I(W'')}
\end{equation}
\begin{equation}
{I(W'_{1})+I(W'_{2})+I(W''_{1})+I(W''_{2})=4I(\tilde W)=2I(W)}
\end{equation}
\begin{equation}
{I(W'_{1})\leq I(W'_{2})}
\end{equation}
\begin{equation}
{I(W''_{1})\leq I(W''_{2})}
\end{equation}
with equality if $I(\tilde W)=0$ or $1$.
\end{pro}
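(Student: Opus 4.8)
The plan is to treat the four intermediate channels $W'_1, W'_2, W''_1, W''_2$ as the bit-channels produced by a single binary polarization transform acting on the $2m=4$ independent copies of $\tilde W$, and then to read off the three kinds of statements — the chain-rule identities (5)–(7) and the polarization inequalities (8)–(9) — from this viewpoint.

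For the identities (5) and (6) I would argue purely by the chain rule of mutual information. Since $\mathbf{H}_2$ is invertible, the map $u_1\mapsto v_1$ is a bijection and $(U_{1,1},U_{1,2})$ is a pair of independent uniform bits carrying exactly the information of $V_1$. Reading off the channel definitions in (4), $I(W'_1)=I(U_{1,1};Y_1,Y_2)$ and $I(W'_2)=I(U_{1,2};Y_1,Y_2,U_{1,1})=I(U_{1,2};Y_1,Y_2\mid U_{1,1})$, so that $I(W'_1)+I(W'_2)=I(U_{1,1},U_{1,2};Y_1,Y_2)=I(V_1;Y_1,Y_2)=I(W')$, which is (5); the argument for (6) is identical, with $V_1$ placed in the conditioning side information. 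For (7) I would then add (5) and (6), invoke the previously established symbol-level chain rule $I(W')+I(W'')=2I(W)$, and use that $W$ is the product of $m=2$ independent copies of $\tilde W$ fed by independent uniform input bits, whence $I(W)=mI(\tilde W)=2I(\tilde W)$ and therefore $2I(W)=4I(\tilde W)$.

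The inequalities (8) and (9) are the crux, and here the plan is to mirror Ar{\i}kan's proof that the minus-channel is dominated by the plus-channel: exhibit a stochastic degradation $W'_1\preceq W'_2$ (and likewise $W''_1\preceq W''_2$) and then apply the data-processing inequality. Writing $u_{1,1}=v_{1,1}\oplus v_{1,2}$ and $u_{1,2}=v_{1,2}$, the first bit plays the role of the sum (minus) coordinate and the second the pass-through (plus) coordinate of the kernel $\mathbf{H}_2$, so one expects $W'_2$ to be at least as informative as $W'_1$. The equality claim is easy to dispatch at the two extremes: if $I(\tilde W)=0$ then every channel in the tree has zero capacity, while if $I(\tilde W)=1$ then $\tilde W$ is noiseless, so all bit-channels are noiseless and every mutual information equals $1$; equality follows in both cases.

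The main obstacle is that, unlike in Ar{\i}kan's setting, the two bit-levels combined by $\mathbf{H}_2$ are \emph{not} independent channel uses: after marginalizing the partner symbol $v_2$ (equivalently $u_{2,1},u_{2,2}$), the observations informative about $v_{1,1}$ and about $v_{1,2}$ share the randomness of $v_2$, so the standard two-independent-copies construction of the degrading map does not transfer verbatim. I would handle this by returning to the fully expanded model of the four independent copies of $\tilde W$ with the explicit binary kernel induced by $\mathbf{F}$ and $\mathbf{H}_2$, and either build the degrading channel directly from that kernel while carrying the shared randomness of $v_2$ as common side information, or establish the inequality by a direct coupling/convexity estimate on the expanded channel. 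As a sanity check before committing to the general argument, I would first treat $\tilde W=\mathrm{BEC}(\epsilon)$, where the bit-channel capacities are elementary functions of $1-\epsilon$ and the gap $I(W'_2)-I(W'_1)$ is manifestly nonnegative and vanishes exactly at $\epsilon\in\{0,1\}$, confirming both the inequality and its stated equality condition.
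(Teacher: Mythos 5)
Your treatment of (5)--(7) is correct and is essentially the paper's own argument: the paper likewise introduces the ensemble $(U_{11},U_{12},U_{21},U_{22},\ldots)$, uses the invertibility of $f:\mathcal{\tilde Y}^m\rightarrow\mathcal{Y}$ and of $\mathbf{H}_2$ to identify $I(W'_1)=I(U_{11};Y_1,Y_2)$, $I(W'_2)=I(U_{12};Y_1,Y_2\mid U_{11})$, and so on, and then applies the chain rule together with $I(W')+I(W'')=2I(W)$ and $\sum_{i,j}I(C_{ij};\tilde Y_{ij})=4I(\tilde W)$.

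For (8)--(9), however, your proposal has a genuine gap: it stops exactly where the proof is needed. You correctly diagnose that the two bit-levels mixed by $\mathbf{H}_2$ are not two independent uses of a common channel (the partner symbol $v_2$ couples the observations), so Ar{\i}kan's two-independent-copies degradation construction does not transfer; but ``either build the degrading channel carrying $v_2$ as common side information, or use a coupling/convexity estimate'' is a list of candidate strategies, not an argument, and the relation $W'_1\preceq W'_2$ in the degradation order is precisely the assertion in doubt. A BEC computation confirms one instance but proves nothing in general. The paper sidesteps degradation entirely and instead follows the mutual-information route of Ar{\i}kan's Appendix C: it lower-bounds the second bit-channels by the raw channel via the chain rule, $I(W'_2)=I(U_{12};\tilde Y_{12})+I(U_{12};\tilde Y_{11},\tilde Y_{21},\tilde Y_{22},U_{11}\mid \tilde Y_{12})\geq I(\tilde W)$ and similarly $I(W''_2)\geq I(\tilde W)$, and then combines these bounds with the sum identities (5)--(7) and the symbol-level inequality $I(W')\leq I(W'')$ to conclude $I(W'_1)\leq I(W'_2)$ and $I(W''_1)\leq I(W''_2)$. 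If you wish to keep your route you must actually exhibit the degrading kernel; otherwise you should switch to this information-theoretic argument, taking care both to verify the identification $I(U_{12};\tilde Y_{12})=I(\tilde W)$ under the actual bit labeling (the coded bit $C_{12}$ is $U_{12}$ masked by bits of $v_2$, whereas $C_{22}=U_{22}$ is clean) and to spell out how the lower bound on $I(W''_2)$ yields $I(W''_1)\leq I(W''_2)$, neither of which is automatic.
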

\begin{proof}
The proof is given in Appendix B.
\end{proof}

Proposition 1 can be extended to arbitrary $q$-ary polar codes with an $m\times m$ binary polarization matrix $\mathbf{H}_m$.
\begin{pro}
Consider a single-step transformation
of two independent copies of a $q$-ary-input. Define
\begin{equation}
\begin{aligned}
&W'_{j}(y_1,y_2,\mathbf{b}_1^{\tilde j}(u_1)|u_{1,j})=\frac{1}{2^{m-1}}\sum_{\substack{u_{1,t}\in\mathbb{F}_2\\j<t\leq m}}W'(y_1,y_2|v_1)\\
&W''_{j}(y_1,y_2,v_1,\mathbf{b}_1^{\tilde j}(u_2)|u_{2,j})=\frac{1}{2^{m-1}}\sum_{\substack{u_{2,t}\in\mathbb{F}_2\\j<t\leq m}}W''(y_1,y_2,v_1|v_2),
\end{aligned}
\end{equation}
where $\tilde j=j-1$, and $v_i=\mathcal{T}_m(u_i)\in \mathbb{F}_q, i=1,2$.
Suppose $\underbrace{(\tilde W,\ldots,\tilde W)}_\text{2\emph{m}}\rightarrow (W'_{1},\ldots,W'_{m},W''_{1},\ldots,W''_{m})$
for some set of binary-input channels and $(W,W)\rightarrow(W',W'')$ for some set of $q$-ary-input channels. Then
\begin{equation}
{\sum_{j=1}^{m}I(W'_{j})=I(W')}
\end{equation}
\begin{equation}
{\sum_{j=1}^{m}I(W''_{j})=I(W'')}.
\end{equation}
\end{pro}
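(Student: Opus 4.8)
The plan is to recognize both identities as instances of the chain rule of mutual information applied bit-by-bit to the $q$-ary inputs, extending the argument already used for the quaternary case to general $m$. Throughout I take the input symbols to be uniformly distributed over $\mathbb{F}_q$, so that the symmetric capacities $I(W')$, $I(W')_j$ and so on are the relevant mutual informations; I write them with random variables $U_1,U_2$ uniform on $\mathbb{F}_q$ having bit expansions $(U_{1,1},\ldots,U_{1,m})$ and $(U_{2,1},\ldots,U_{2,m})$.

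First, for the $W'$ family I would exploit that $\mathcal{T}_m$ is a bijection on $\mathbb{F}_q$ (since $\mathbf{H}_m$ is invertible), so that with $V_1=\mathcal{T}_m(U_1)$ one has $I(W')=I(V_1;Y_1Y_2)=I(U_1;Y_1Y_2)=I(U_{1,1}\cdots U_{1,m};Y_1Y_2)$. The chain rule then gives $I(U_1;Y_1Y_2)=\sum_{j=1}^{m}I(U_{1,j};Y_1Y_2\mid U_{1,1}^{j-1})$, and the task reduces to identifying each summand with $I(W'_j)$.

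The key step is to check that the channel $W'_j$, defined by marginalizing over the future bits $u_{1,t}$ with $j<t\le m$ and normalizing by $1/2^{m-1}$, is exactly the channel $U_{1,j}\to(Y_1,Y_2,U_{1,1}^{j-1})$ in which the preceding bits are revealed and the succeeding bits are averaged out. I would verify this in two ways. First, $W'_j$ is a legitimate transition law: summing over all outputs, the $2^{j-1}$ revealed configurations and the $2^{m-j}$ averaged configurations together contribute the factor $2^{j-1}\cdot 2^{m-j}=2^{m-1}$ that cancels the normalization, giving total mass $1$. Second, and crucially, $\tfrac12 W'_j(y_1,y_2,u_{1,1}^{j-1}\mid u_{1,j})$ coincides with the true marginal joint $P(u_{1,j},u_{1,1}^{j-1},y_1,y_2)$ obtained from the global uniform model on all $m$ bits, so that $I(W'_j)=I(U_{1,j};Y_1Y_2U_{1,1}^{j-1})$. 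Since the bits $U_{1,1},\ldots,U_{1,m}$ are independent, $I(U_{1,j};U_{1,1}^{j-1})=0$, whence $I(U_{1,j};Y_1Y_2U_{1,1}^{j-1})=I(U_{1,j};Y_1Y_2\mid U_{1,1}^{j-1})$. Summing over $j$ and invoking the chain-rule identity above yields $\sum_{j=1}^{m}I(W'_j)=I(W')$.

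The identity $\sum_{j=1}^{m}I(W''_j)=I(W'')$ follows by the identical argument applied to $U_2$, the only difference being that the output now also carries the side information $v_1$; that is, one decomposes $I(W'')=I(U_2;Y_1Y_2V_1)=\sum_{j=1}^{m}I(U_{2,j};Y_1Y_2V_1\mid U_{2,1}^{j-1})$ and matches each term to $I(W''_j)$ as before. I expect the main obstacle to be the bookkeeping in the middle step: one must check that the single normalization $1/2^{m-1}$ simultaneously accounts for the $j-1$ revealed bits and the $m-j$ marginalized bits, and that the bit-independence used to convert the joint mutual information into a conditional one is applied correctly. Once this correspondence is established, the chain rule supplies both equalities immediately.
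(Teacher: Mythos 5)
Your argument is correct and is essentially the paper's own approach: the paper proves only the quaternary case in Appendix~B (identify each $I(W'_j)$ with $I(U_{1,j};Y_1,Y_2,U_{1,1}^{j-1})$, use independence of the bits to pass to the conditional mutual information, and apply the chain rule together with the bijectivity of the relevant maps), and then asserts the general $m$ case as an extension. Your proposal is exactly that extension, with the added (and welcome) bookkeeping verifying that $\tfrac12 W'_j$ is the true joint marginal and that the $1/2^{m-1}$ normalization is consistent.
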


With the definition of binary polarization matrix, there exists two different $W'_{j}$ ($W''_{j}$) having different capacities, which implies that bit-level construction can be considered when $I(\tilde W)$ is not equal to $0$ nor $1$.

Note that the concept of two-stage polarization can also be extent to other polarization matrix based nonbinary polar codes, such as RS-based polar codes and Hermitian-based polar codes. In this paper, we only consider two-stage polarization of MR-based matrix.

\subsection{Bit-level Computation-based Code Construction}
With two-stage polarization, we consider the equivalent channel reliabilities on bit-level, which is different from the conventional nonbinary code construction with sorting symbol-channel reliabilities.

Similar to the Monte-Carlo approach for binary cases, a genie-aided symbol-based SC decoder is used to compute the bit-channel reliabilities for $q$-ary polar codes with $\mathbf{G}_n$, where $N$ information bits are uniformly generated resulting in $n$ $q$-ary symbols $\mathbf{u}_1^n$. After LT construction, $n$ $q$-ary input-symbols $\mathbf{v}_1^n$ are delivered into $q$-ary polar encoder, where $v_i=\mathcal{T}_m(u_i)$ for each $0\leq i\leq n$. Suppose that the bit-reversal $\pi(\cdot)$ is used at the receiver resulting in the vector $\mathbf{y}_1^n$. Let $p(y|v)$, $v\in \mathbb{F}_q$, represent the channel conditional probabilities given by the demapper. Due to the similar structure to binary polar codes, the recursive formulas for the symbol-based SC decoding are given as
\begin{equation}
\begin{aligned}
&p_\lambda ^{(2i-1)}(\mathbf{y}_1^\Lambda ,\mathbf{v}_1^{2i - 2}|{v_{2i-1}})\\
&=\sum\limits_{{v_{2i}}}{\frac{1}{q}\{p_{\lambda-1}^{(i)}(\mathbf{y}_1^{\Lambda /2},\mathbf{v}_{1,odd}^{2i-2} + {\alpha ^{{i_{t + 1}}}}\mathbf{v}_{1,even}^{2i - 2}|{v_{2i - 1}} + {\alpha ^{{i_t}}}{v_{2i}})}\\
&~~~~~\cdot p_{\lambda-1}^{(i)}(\mathbf{y}_{\Lambda /2 + 1}^\Lambda ,\mathbf{v}_{1,even}^{2i - 2}|{v_{2i}})\},
\end{aligned}
\label{eq:8}
\end{equation}
\begin{equation}
\begin{aligned}
&p_\lambda ^{(2i)}(\mathbf{y}_1^\Lambda,\mathbf{v}_1^{2i-1}|{v_{2i}})\\
&=\frac{1}{q}p_{\lambda - 1}^{(i)}(\mathbf{y}_1^{\Lambda /2},\mathbf{v}_{1,odd}^{2i-2}+{\alpha^{{i_{t + 1}}}}\mathbf{v}_{1,even}^{2i-2}|{v_{2i-1}} + {\alpha ^{{i_t}}}{v_{2i}})\\
&~~~~\cdot p_{\lambda-1}^{(i)}(\mathbf{y}_{\Lambda /2 + 1}^\Lambda,\mathbf{v}_{1,even}^{2i-2}|{v_{2i}}),
\end{aligned}
\label{eq:9}
\end{equation}
where $1 \leq \lambda\leq r={\log_2}n$, $\Lambda=2^{\lambda}$, $1\leq i\leq \lfloor\frac{\Lambda+1}{2}\rfloor$, $t=r-\lambda+1$ and $p_0^{(1)}(y|v) = p(y|v)$.

Denote the estimated symbols by $\hat{u}_1,\ldots,\hat{u}_n$. Then, for conventional \textit{symbol-level} computation, the decision rule is as follows:
\begin{equation}
\begin{split}
&\hat{u}_i=\mathcal{T}_m^{-1}(\beta),~\text{if}~p_r^{(i)}(\mathbf{y}_1^n, \mathbf{\hat{v}}_1^{i-1}|\beta)\ge p_r^{(i)}(\mathbf{y}_1^n,\mathbf{\hat{v}}_1^{i-1}|\gamma)\\
&~~~~~~~~~~~~~~~~~~\text{for~all}~\gamma ~\text{with}~\beta\neq \gamma\in\mathbb{F}_q.
\end{split}
\end{equation}
While for \textit{bit-level} computation, we first define the likelihood ratio $L_{i,j}$ for the $j$-th bit in $i$-th symbol,
\begin{equation}
L_{i,j}=\frac{\sum\limits_{\mathbf{b}(\mathcal{T}_m^{-1}(\beta))_j=0}p_r^{(i)}(\mathbf{y}_1^n,\mathbf{\hat{v}}_1^{i-1},\mathbf{b}_1^{\tilde{j}}(\hat{u}_i)|\beta)}{\sum\limits_{\mathbf{b}(\mathcal{T}_m^{-1}(\beta))_j=1}p_r^{(i)}(\mathbf{y}_1^n,\mathbf{\hat{v}}_1^{i-1},\mathbf{b}_1^{\tilde{j}}(\hat{u}_i)|\beta)},~~1\leq j\leq m,
\end{equation}
where $\tilde j=j-1$, and $\mathbf{b}(\mathcal{T}_m^{-1}(\beta))_j$ represents the $j$-th bit in the binary representation of the symbol that before LT. Then, the decision rule is given as:
\begin{equation}
\hat{u}_{i,j}=\left\{{\begin{array}{*{20}{c}}
	{0,~~\text{if}~L_{i,j}\geq 1}\\
	{1,~~\text{otherwise}}.
	\end{array}}\right.
\end{equation}
Thus, the estimate of $\hat{u_i}$ is given by $\hat{u_i}=g(\hat{u}_{i,1},\ldots,\hat{u}_{i,m})$.

Note that, with the genie, the reliability of ${(i,j)}$-th bit-channel $z_b({i,j})$ ($\forall 1\leq j\leq m$) is computed under the assumption that both symbols $\mathbf{u}_{1}^{{i-1}}$ ($1\leq i \leq n$) and $\tilde j=j-1$ bits $\mathbf{b}_1^{\tilde j}(u_i)$ in $i$-th symbol are available at the symbol-based SC decoder, which is different from the reliability computation of $i$-th symbol-channel $z_s(i)$ with the assumption that only symbols $\mathbf{u}_{1}^{{i-1}}$ ($1\leq i \leq n$) are available at the decoder. Compared to \cite{Arikan2009}, we calculate the error-rate of symbol-channels and bit-channels rather than their Bhattacharyya parameters to reflect the channel reliabilities.

Now, we consider the \textit{bit-level} construction, i.e., constructing the set $\mathcal{B}$ based on the descending order of \textit{bit-channel} reliabilities. Specifically, the set $\mathcal{B}$ consists of the indices of the lowest $|\mathcal{B}|$ elements in $\{\tilde{z}_b({i,j}), 1\leq i\leq n, 1\leq j\leq m\}$, where $|\mathcal{B}|$ is determined by the $K$ and check bits (if any).
The bit-channels indexed by $\mathcal{B}$ will be used to transmit unfrozen bits.
Due to the construction, frozen bits are typically distributed as shown in Fig. \ref{fig:10}(b). The set $\mathcal{A}$ is determined by $\mathcal{B}$, in which different unfrozen symbols may contain different numbers of unfrozen bits, denoted by $m_t$ with $1\leq m_t\leq m$.
Note that, for the conventional \textit{symbol-level} construction, each $q$-ary unfrozen symbol contains $m$ unfrozen bits.

\begin{figure}[!t]
	\centering
	\includegraphics[width=3.36in]{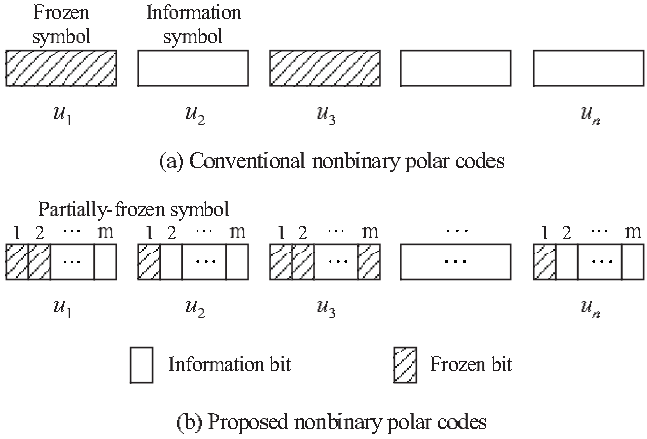}
	\caption{Input formats of the bits/symbols to the encoder.}
	\label{fig:10}
\end{figure}

In terms of $m_t$, we distinguish the symbols into three types, namely \textit{information symbols}, \textit{partially-frozen symbols} and \textit{frozen symbols}. Both information symbols and partially-frozen symbols are known as unfrozen symbols. With the \textit{bit-level} code construction, there is an advantage that the decoding search range\footnote{The decoding search range is referred to the set of decoding candidates for a symbol. For $q$-ary information symbols, the size of decoding search range is $q$, and for $q$-ary frozen symbols, the size is 1.} can be reduced at the decoder for the partially-frozen symbols, which will be discussed in the next section. In this paper, we use the element $0$ as the frozen symbol or bit.
\subsection{Decoding Algorithm}
According to the code construction, for different symbol types, the decoding search ranges are different. Let $\mathcal{B}_i$ represent the decoding search range of $i$-th ($1\leq i\leq n$) symbol, and the size $|\mathcal{B}_i|$ is determined by the unfrozen bits in $i$-th symbol. The computation of set $\mathcal{B}_i$ is given in Algorithm \ref{alg1}.
In the decoding tree, $i$-th node is split into $|\mathcal{B}_i|$ branches. Thus, the full decoding tree size is $\prod_{i=1}^n|\mathcal{B}_i|$ for the proposed $q$-ary polar codes, where $|\mathcal{B}_i|=1$ for $i\in\mathcal{A}^c$. Since different branches may exist among unfrozen nodes at decoding stages, we call this irregular decoding tree.

Different from the decision rule of $i$-th symbol $\hat{u}_i$ in Section III-C, the rule for SC-decoder here is
\begin{itemize}
\item for $i\in\mathcal{A}^c$, then $\hat{u}_i=0$.

\item for $i\in\mathcal{A}$, then $\hat{u}_i=\mathcal{T}_m^{-1}(\beta)\in\mathcal{B}_i,\text{if}~p_r^{(i)}(\mathbf{y}_1^n,\mathbf{\hat{v}}_1^{i-1}|\beta)\geq p_r^{(i)}(\mathbf{y}_1^n,\mathbf{\hat{v}}_1^{i-1}|\gamma)$ for all $\gamma$ with $\beta\neq \gamma$ and $\mathcal{T}_m^{-1}(\gamma)\in\mathcal{B}_i$.
\end{itemize}

Now we consider list decoder for the proposed $q$-ary polar codes. Assume that the maximum $L$ decoding paths are kept at each decoding stage. Denote the estimate $\hat{u}_i$ at path $l$ by $\hat{u}_i[l]$ ($1\leq l\leq L$). Note that $|\mathcal{B}_i|$ candidates should be considered for $\hat{u}_i[l]$ with any path $l$. Thus, there are a total of $L|\mathcal{B}_i|$ candidates corresponding to the $\hat{u}_i$. To maintain the maximum $L$ paths, if $L|\mathcal{B}_i|>L$, all $L|\mathcal{B}_i|$ candidate paths will be sorted according to their corresponding probabilities $p_r^{(i)}(\mathbf{y}_1^n,\mathbf{\hat{v}}_1^{i-1}|\beta)[l]$ with $\mathcal{T}_m^{-1}(\beta)\in\mathcal{B}_i$ and the paths with lowest probabilities will be pruned until only $L$ paths remain.
At the last stage, the decoder outputs the estimated information sequence
given by the decoding path with the largest probability.
The main SCL decoding algorithm for two-stage polarization-based polar codes is shown in Algorithm \ref{alg2}.

\indent
\begin{algorithm}[!t]
	\caption{Computation~of~Set~$\mathbf{\mathcal{B}_i}$}
	\label{alg1}
	\KwIn{$\mathcal{A}^c$ and $\mathcal{B}^c$}
	\KwOut{Sets~$\{\mathbf{\mathcal{B}_i},1\leq i\leq n\}$}
	$tp\leftarrow 0$, ${P_1}\leftarrow \emptyset$, $\ldots~$, ${P_m}\leftarrow \emptyset$.\\
	\For {$i=1$ \text{to} $n$}
	{
		\If {$i \in \mathcal{A}^c$}
			{
				$\tiny{\bullet}$ Set $\mathcal{B}_i\leftarrow\{0\}$.
			}
		\Else{
			\For{$j=1$~to~$m$}{
			$tp\leftarrow m(i-1)+j$.\\
			\If{$tp\in\mathcal{B}^c$}
			{
			  ${P_j}\leftarrow \{0\}$.
		    }
			\Else{${P_j}\leftarrow \{0,1\}$.}
			}
		      $\tiny{\bullet}$ Set $\mathcal{B}_i\leftarrow P_1\times P_2\times \cdots\times P_m$ with $q$-ary element representation.
		}	
	}				
\end{algorithm}
\begin{algorithm}[!t]
\caption{SCL~Decoder~for~Nonbinary~Polar~Codes}
\label{alg2}
\KwIn{Sets~$\{\mathbf{\mathcal{B}_i},1\leq i\leq n\}$, $\mathcal{A}$, $\mathcal{A}^c$, and received channel probabilities}
\KwOut{Estimated infromation sequence}
\For {$i=1$ \text{to} $n$}
{
\For {$l=1$ \text{to} $L$}
{
\If {$i \in \mathcal{A}^c$}
{
 $\tiny{\bullet}$ Set $\hat{u}_i[l]=0$ and keep all paths.
}
\Else{
$\tiny{\bullet}$ Calculate the conditional probabilities 	$p_r^{(i)}(\mathbf{y}_1^n,\mathbf{\hat{v}}_1^{i-1}|\beta)[l]$ with $\mathcal{T}_m^{-1}(\beta)\in\mathcal{B}_i$ using Eq. (\ref{eq:8}) and Eq. (\ref{eq:9}).\\
}
}
\If{$i \in \mathcal{A}$}{
$\tiny{\bullet}$ Sort the conditional probabilities in descending order, and select the $L$ most likely paths with the largest probabilities.\\
}
}			
Find the most likely path with the largest probability.
\end{algorithm}
\subsection{Numerical Results}
Two examples of two-stage polarization-based polar codes are provided in this section. In all simulations, BPSK signaling over the AWGN channel is assumed.

\textbf{Example 1:}
Consider the MR-based nonbinary polar codes constructed by \textit{bit-level} computation, where two code rates $R=1/2$ and $R=1/3$ are considered for $q=16$, $N=2048$, and $L=8$. Refer to Fig. \ref{fig:9}, the performance of comparable binary polar codes and MR-based nonbinary polar codes constructed by \textit{symbol-level} computation are also given. For comparison, the $16$-ary polar codes with the generator matrix constructed by a pure multiplier are considered\footnote{Note that the elements $\alpha^2$, $\alpha^4$, and $\alpha^8$ are also the primitive elements of $\mathbb{F}_{16}$.}, which can be regarded as the 16-ary code using $\mathbf{G}_{\text{RS}}(16,2)$ mentioned in \cite{Mori20101}, and the codes are constructed by \textit{symbol-level} computation. All codes are constructed via the Monte-Carlo method at $E_b/N_0=2.0$ dB.

\begin{figure}[!t]
	\centering
	\includegraphics[width=3.36in]{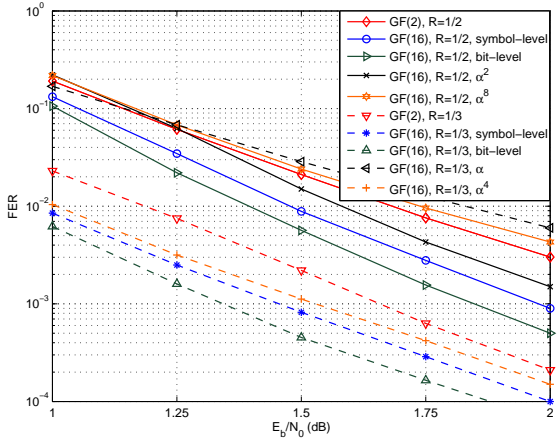}
	\caption{Performance comparison among binary polar codes, MR-based nonbinary polar codes with \textit{bit-level} computation and \textit{symbol-level} computation $N=2048$.}
	\label{fig:9}
\end{figure}

From Fig. \ref{fig:9}, it can be seen that the nonbinary polar code with a pure multiplier exhibits an inferior performance. We can also see that MR-based $16$-ary polar codes constructed with \textit{bit-level} computation perform better than that with \textit{symbol-level} computation for both $R=1/2$ and $R=1/3$. Moreover, the proposed $16$-ary polar codes can provide up to $0.35$ dB gain for $R=1/2$ at FER=$3\times10^{-3}$, and about $0.25$ dB gain for $R=1/3$ at FER=$2\times10^{-4}$, with respect to the binary polar codes.

\textbf{Example 2:}
\begin{figure}[!t]
	\centering
	\includegraphics[width=3.36in]{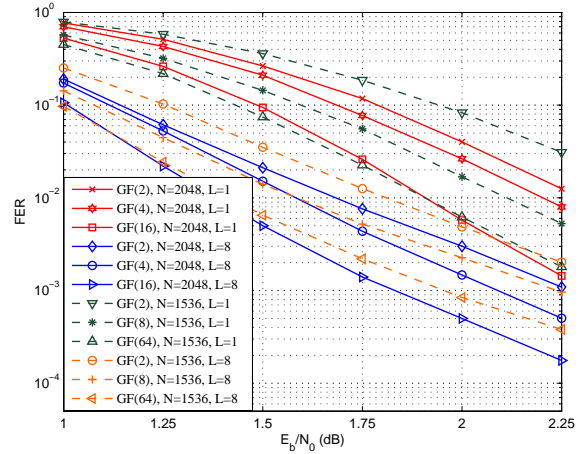}
	\caption{Performance comparison between binary polar codes and two-stage polarization-based nonbinary polar codes.}
	\label{fig:11}
\end{figure}
Consider MR-based nonbinary polar codes with \textit{bit-level} construction. The error-correcting performance of two equivalent code lengths $N=2048$ and $N=1536$ with $R=1/2$ are given, where both $q=4$ and $q=16$ are considered for $N=2048$, and both $q=8$ and $q=64$ are considered for $N=1536$. As a reference, the performance of comparable binary polar codes are also shown in Fig. \ref{fig:11}, in which the quasi-uniform puncturing (QUP) method in~\cite{Niu2013} is used to adapt the code length. All codes are constructed by the Monte-Carlo method at $E_b/N_0=2.0$ dB.

It can be seen that the proposed nonbinary polar codes exhibit better error-correcting performance than binary polar codes for both $L=1$ and $L=8$. Note that, for $N=1536$, non extra length-matching method is applied to nonbinary polar, which implies that larger code length range can be obtained by adapting the field order for nonbinary polar codes. Furthermore, with the increase of the field order, the performance can be improved.

\section{Improved Nonbinary Coding Methods}
In order to facilitate the SCL decoder to detect and prune error paths in time, the frozen bits in partially-frozen symbols are used as active-check bits, which is inspired by the concept of parity-check (or dynamic-frozen bits)~\cite{Trifonov2016}-\cite{Zhang2018}.
Moreover, similar to CRC-aided polar codes, a CRC outer code is also considered for the two-stage polarization-based nonbinary polar codes to improve the error-correcting performance.

\subsection{Active-check-used Nonbinary Polar Codes}
\subsubsection{Encoding Design}
According to \textit{bit-level} construction, it can be seen that the equivalent symbol-channels, which corresponds to partially-frozen symbols, exhibit inferior symbol-error probability leading to error-propagation, thus we pay more attention to them and attempt to use check constraint to reduce the error-propagation.

Assume that the set of the index of partially-frozen symbols is denoted by $\mathcal{A}^*$, where $\mathcal{A}^*\subseteq \mathcal{A}$. The construction is given by the following steps,
\begin{enumerate}
\item Find the frozen bit-channel with lowest bit-error probability among the frozen bit-channels in each partially-frozen symbol $t\in \mathcal{A}^*$, collectively denoted by $\mathcal{D}$ ($|\mathcal{D}|=|\mathcal{A}^*|$).

\item For each element in $\mathcal{D}$, construct the set $\mathcal{I}_i$ ($1\leq i\leq |\mathcal{D}|$), which is formed by the index $j\in\mathcal{B}$ with $j\leq m\mathcal{A}^*[i]$.

\item Generate $|\mathcal{D}|$ binary sequences $\mathbf{s}_i$ with length $|\mathcal{I}_i|$ randomly.
The sequence acts as a puncture pattern, and if the $j$-th element in $\mathbf{s}_i$ is $``0"$, then the $j$-th element in $\mathcal{I}_i$ will be deleted.

\item For each $i$, we obtain the active-check (ACK) bit $u_{{\hat i},j}=\biguplus\limits_{(mi'+j')\in\mathcal{I}_i} u_{i', j'}$, where $\hat i=\mathcal{A}^*[i]$, $j=\mathcal{D}[i]-m(\hat i-1)$, and $\biguplus$ denotes the modulo-2 sum.
\end{enumerate}

\textbf{Example 3:}
Let $q=16$, $n=8$, and $R=1/2$. Assume that $\mathcal{B}=\{14, 15, 16, 20, \ldots, 32\}$ (designed at $E_b/N_0=2.0$ dB with $|\mathcal{B}|=16$) resulting in $\mathcal{A}=\{4, 5, 6, 7, 8\}$ and $\mathcal{A}^*=\{4,5\}$. Under the assumption that $\mathcal{D}=\{13,19\}$ and according to the Step 2), we get $\mathcal{I}_1=\{14,15,16\}$ and $\mathcal{I}_2=\{14,15,16,20\}$. If $\mathbf{s}_1=\{0,1,1\}$ and $\mathbf{s}_2=\{1,0,1,1\}$, then the updated $\mathcal{I}_1=\{15,16\}$ and $\mathcal{I}_2=\{14,16,20\}$. Thus we have the active-check bits $u_{4,1}=u_{4,3}\oplus u_{4,4}$ and $u_{5,3}=u_{4,2}\oplus u_{4,4}\oplus u_{5,4}$, seen in Fig. \ref{fig:21}.

\begin{figure}[!t]
	\centering
	\includegraphics[width=3.36in]{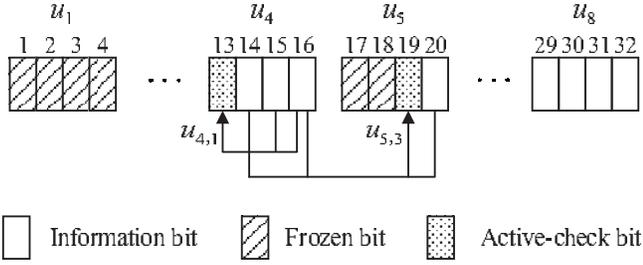}
	\caption{The construction of active-check bits.}
	\label{fig:21}
\end{figure}

The set $\mathcal{I}_i$ exhibits the check constraint, which is based on a random generator in this paper, and can be optimized by other check approaches. Due to active-check bits, $\mathcal{B}^c=\{1,2,\ldots,N\}\backslash \{\mathcal{B\bigcup D\}}$.
\subsubsection{Decoding}
At the SCL decoder, each active-check bit is determined by the
estimated information bits which involve in the corresponding $\mathcal{B}_i$. Note that, with active-check bits, for different path $l$ the set $\mathcal{B}_i[l]$ of the partially-frozen symbol is different. Furthermore, although the decoding set $\mathcal{B}_i[l]$ changes, the size of $\mathcal{B}_i[l]$ does not change. The main SCL decoding with ACK is shown in Algorithm \ref{alg3}, where the decoding set $\mathcal{B}_i[l]$ is computed on-line and determined by the decoding.

\begin{algorithm}[!t]\footnotesize
\caption{SCL~Decoder~with~ACK}
\label{alg3}
\KwIn{$\mathcal{A}$, $\mathcal{A}^*$, $\mathcal{A}^c$, $\mathcal{B}$, $\mathcal{B}^c$ and received channel probabilities}
\KwOut{Estimated infromation sequence}
$tp\leftarrow 0$, $t\leftarrow 1$, $r\leftarrow 0$,${P_1}\leftarrow \emptyset$, $\ldots~$, ${P_m}\leftarrow \emptyset$, $\{\mathbf{\mathcal{B}_i}[L]\leftarrow \emptyset,1\leq i\leq n\}$.\\
\For {$i=1$ \text{to} $n$}
{
	\For {$l=1$ \text{to} $L$}
	{
		\If {$i \in \mathcal{A}^c$}
		{
			$\tiny{\bullet}$ Set $\mathcal{B}_i[l]\leftarrow \{0\}$.\\
			$\tiny{\bullet}$ Set $\hat{u}_i[l]=0$ and keep all paths.
		}
		\Else{
			\If{$i \in \mathcal{A}^*$}
			{
			
			\For{$j=1$~to~$m$}{
				$tp\leftarrow m(i-1)+j$.\\
				\If{$tp \in \mathcal{B}^c$}{
				  ${P_j}\leftarrow \{0\}$.
				}
				\Else{
				\If{$tp\in\mathcal{B}$}
				{
					${P_j}\leftarrow \{0,1\}$.
				}
				\Else{${P_j}\leftarrow$\text{Compute the active-check bit}\\ \text{according to} $\mathcal{B}_{t}$ \text{with the estimated} \text{information bits} $\mathbf{\hat{u}}_1^{i-1}[l]$ and $\{P_r, r\in\mathcal{B},\text{with}~m(i-1)+1\leq r\leq m\cdot i\}$.\\
				$t\leftarrow t+1$.\\
			}
			}
			}
			$\tiny{\bullet}$ Set $\mathcal{B}_i[l]\leftarrow P_1\times P_2\times \cdots\times P_m$ with $q$-ary element representation.
			
		    }
	
	        \Else{
	        $\tiny{\bullet}$ Set $\mathcal{B}_i[l]\leftarrow\{0,1,\dots,q-1\}$.
           }			
			$\tiny{\bullet}$ Calculate the conditional probabilities 	$p_r^{(i)}(\mathbf{y}_1^n,\mathbf{\hat{v}}_1^{i-1}|\beta)[l]$ with $\mathcal{T}_m^{-1}(\beta)\in\mathcal{B}_i[l]$ using Eq. (\ref{eq:8}) and Eq. (\ref{eq:9}).\\
		}
	}
	\If{$i \in \mathcal{A}$}{
		$\tiny{\bullet}$ Sort the conditional probabilities in descending order, and select the $L$ most likely paths with the largest probabilities.\\
	}
}			
Find the most likely path with the largest probability.
\end{algorithm}

\subsubsection{Numerical Results}
\begin{figure}[!t]
	\centering
	\includegraphics[width=3.36in]{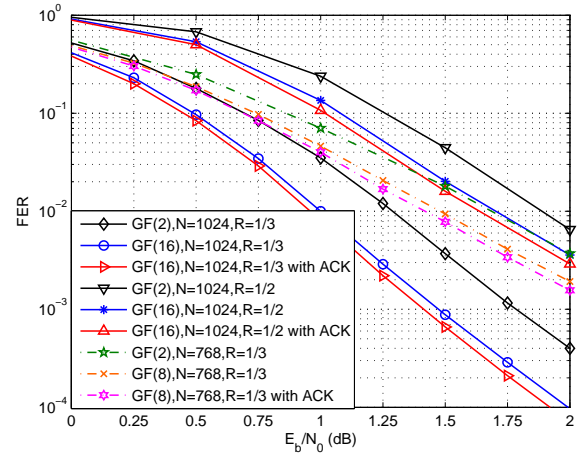}
	\caption{Performance comparison among binary polar codes, nonbinary polar codes with/without ACK bits.}
	\label{fig:12}
\end{figure}
Performance comparison with BPSK over the AWGN channel among binary polar codes, nonbinary polar codes with/without ACK bits is shown in Fig. \ref{fig:12}, where two equivalent code lengths $N=1024$ and $N=768$ are considered under SCL method with $L=8$. All codes are constructed by Monte-Carlo method at $E_b/N_0=2.0$ dB. We can see that the proposed nonbinary polar codes with ACK outperform both the codes without ACK and binary polar codes for different $N$ and $R$.

\subsection{CRC-aided Nonbinary Polar Codes}
\subsubsection{Encoding and Decoding}
Similar to binary polar codes, a binary CRC outer code can also be concatenated with the proposed nonbinary polar codes  to further improve the error-correcting performance. The encoding and decoding structure with $t$-bit CRC is shown in Fig. \ref{fig:13}, where $K+t$ unfrozen bits are considered via \textit{bit-level} code construction. After SCL decoding, the decoder outputs the estimated information sequence given by the decoding path with the largest probability among the paths which can pass the CRC.
\begin{figure*}
	\centering
	\includegraphics[width=5.0in]{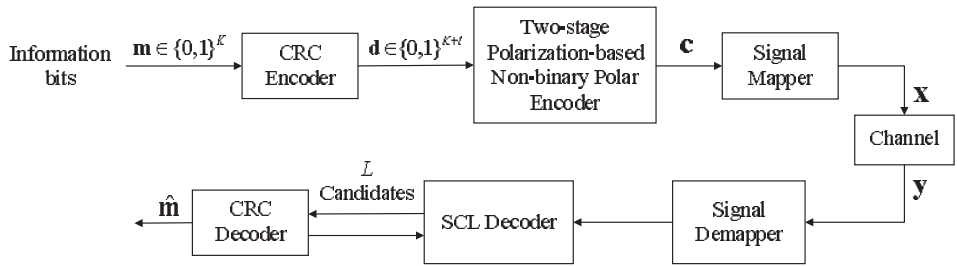} 
	\caption{System model for CRC-aided nonbinary polar codes.}
	\label{fig:13}
\end{figure*}

\subsubsection{Numerical Results}
Two examples of the proposed polar codes with CRC-aided decoding are provided in this subsection, where CRC-8 is applied to all polar codes. In all simulations, BPSK signaling over the AWGN channel is assumed.

\textbf{Example 4:} Four rate-$1/2$ codes with equivalent code length of $N=512$ and $N=2048$ are simulated. For reference, the performance of comparable binary polar codes and RS4-based nonbinary polar codes, provided in \cite{Cheng2016}, are also given in~Fig. \ref{fig:18}. All codes are constructed by Monte-Carlo method at $E_b/N0=2.0$ dB.

\begin{figure}[!t]
	\centering
	\includegraphics[width=3.36in]{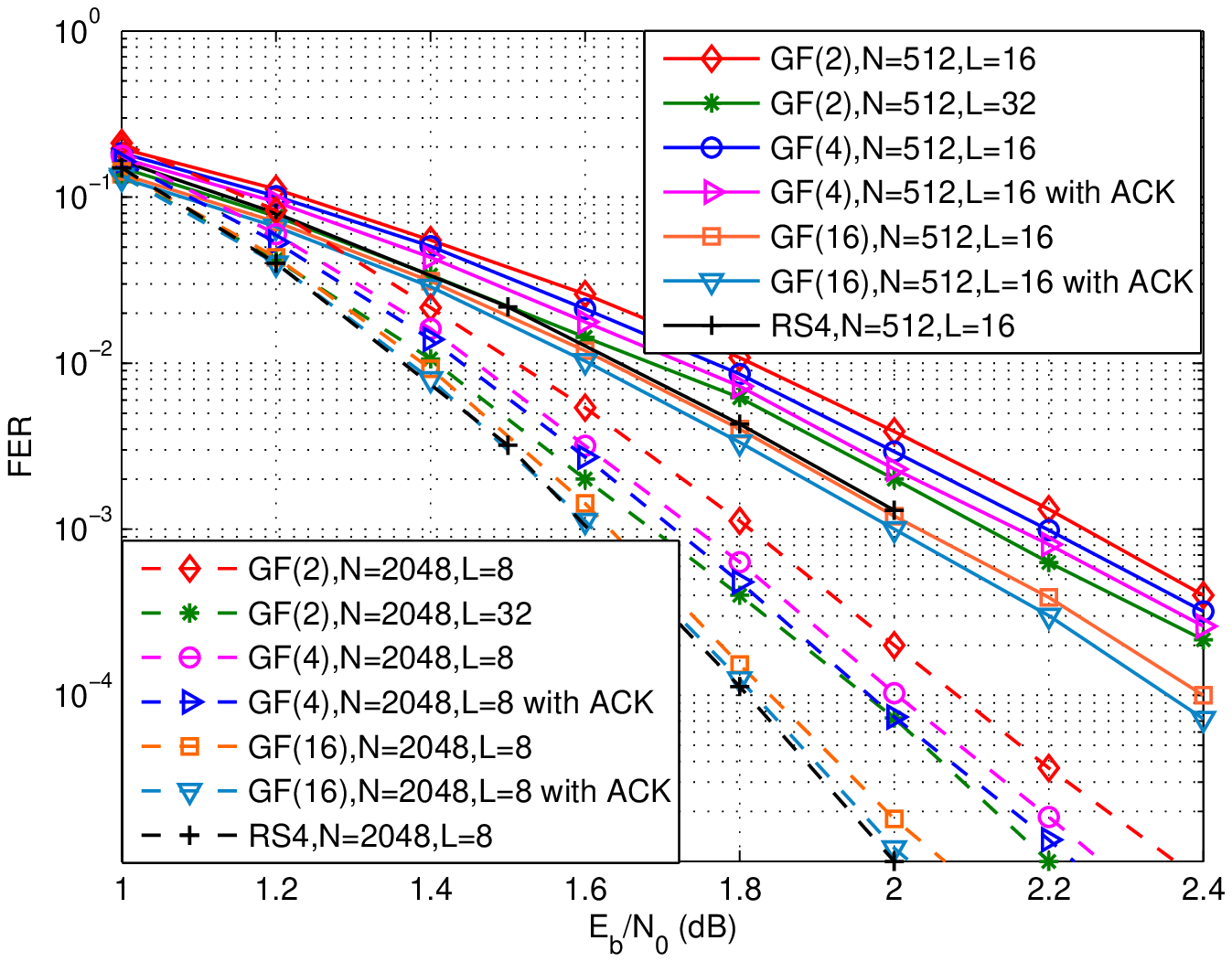}
	\caption{Performance comparison among binary polar codes, nonbinary polar codes with/without ACK bits under CRC-8-aided SCL method.}
	\label{fig:18}
\end{figure}

From Fig. \ref{fig:18}, it can be seen that the proposed nonbinary polar codes with ACK under CRC-aided SCL method also perform better than both the codes without ACK and binary polar codes. Moreover, compared with binary polar codes, similar performance can be obtained by the proposed polar codes with a smaller list size. Although the proposed $4$-ary polar codes exhibit inferior performance (about 0.1 dB) than $4$-ary RS-based nonbinary polar codes, the proposed codes have simple decoding structures and the performance can be improved by larger field orders.

\textbf{Example 5:}
Consider the performance comparison in URLLC. According to the simulation assumptions in \cite{Sybis2016}, three low code rates, $1/3$, $1/6$, and $1/12$ are considered, and the information lengths are all set to $K=256$. The $8$-ary polar codes with ACK are simulated for all rates, and all $8$-ary codes are constructed at $E_b/N_0=0.0$ dB. Binary polar codes are constructed by the Gaussian approximation (GA) method at $-1.59$ dB~\cite{Vangala2015}. The  CRC-aided list decoding with $L=8$ are applied to all polar codes. For reference, the performance of LDPC codes designed for URLLC in \cite{Wu2018} is also given, where the sum-product algorithm (SPA) with $20$ iterations is applied.
\begin{figure}[h]
	\centering
	\includegraphics[width=3.36in]{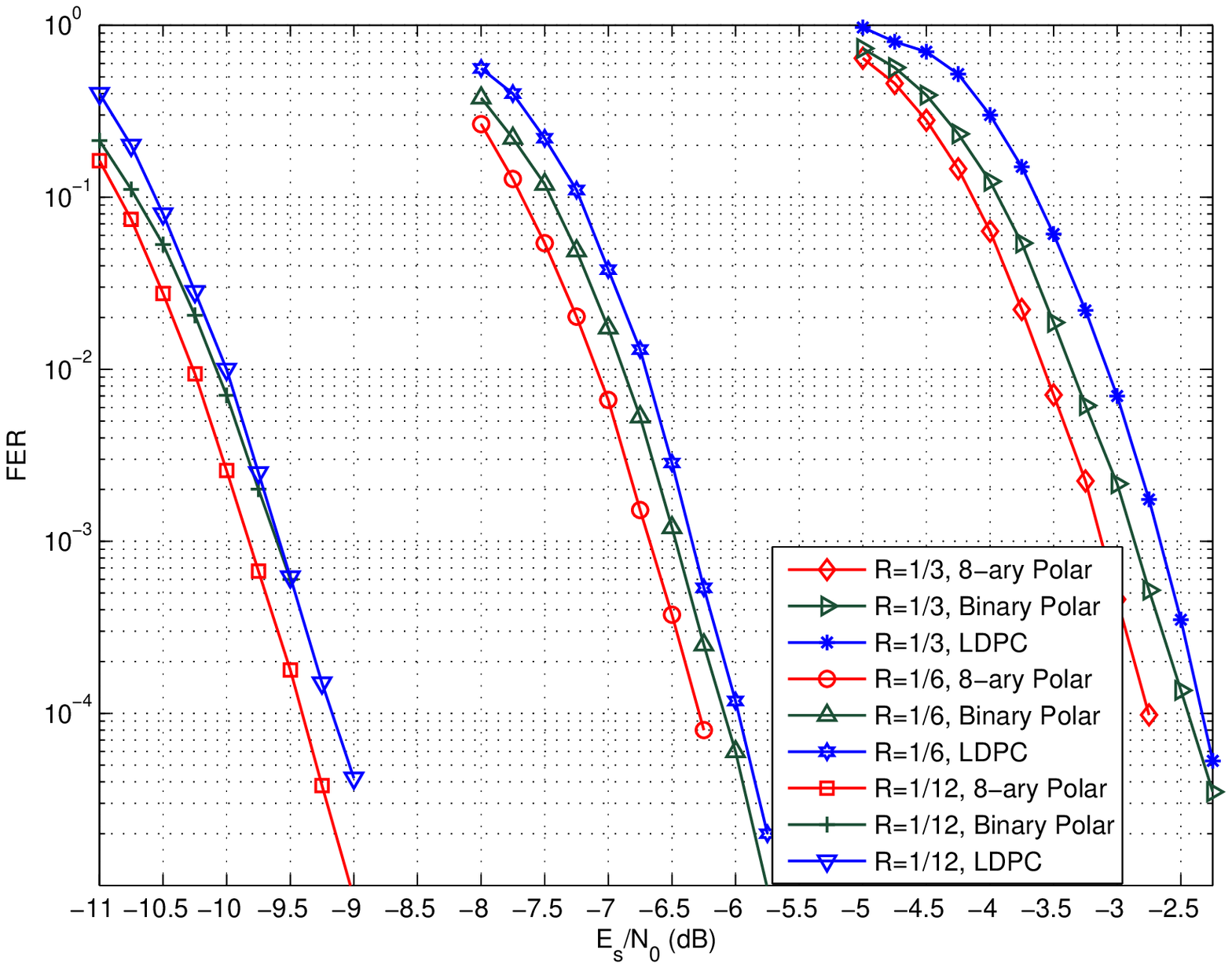}
	\caption{Performance comparison among binary polar codes, nonbinary polar codes with/without ACK bits under CRC-8-aided SCL method.}
	\label{fig:17}
\end{figure}

It can be seen that nonbinary polar codes outperform both binary polar codes  and LDPC codes in URLLC. In addition, with the increase of the code rate, the more coding gain can be obtained by the nonbinary polar codes.

\section{Analysis of Decoding}
\subsection{Decoding Latency}
Let us first consider the latency of SC decoding. Denote by functions $\mathbf{f}$ and $\mathbf{g}$ the node update equations (\ref{eq:8}) and (\ref{eq:9}), respectively. According to~\cite{Vardy20112}, for binary polar codes with\footnote{The rate-matching is considered, when $m$ is not a power of 2.} $N=2^{\lceil\log_2{m}\rceil} n$, $2N-2$ clock cycles (CLKs) are required for the SC decoder, where both functions $\mathbf{f}$ and $\mathbf{g}$ can be carried out in a single CLK. For the proposed $q$-ary polar codes, $(n-1)\log_2q$ CLKs are needed for the function $\mathbf{f}$ since $\log_2q$ additions are required for each unit $\mathbf{f}$, and the number of CLKs required for the function $\mathbf{g}$ is the same as that of binary cases. Thus, a total of $(n-1)\log_2q+(n-1)=(m+1)(n-1)$ CLKs are required for the proposed $q$-ary SC decoder.

A $L$-size SCL decoder can be viewed as the combination of $L$ copies of SC component decoders. In addition, the SCL decoder needs to sort $qL$ (for worst case) path metrics and selects the $L$ largest metrics for each decoded bit, thus, extra CLKs are required to carry out sorting and selecting functions. Using Bitonic sorter~\cite{Alexios2015}, the total number of stages is
$S=\frac{1}{2}\log_2(qL)(\log_2(qL)+1)$,
where each stage contains $\frac{qL}{2}$ compare-and-select (CAS) units consisting of one comparator and a 2-to-2 MUX. Generally, an intermediate variable (or register) is required to swap two numbers resulting in 3 CLKs~\cite{Han2017}. Since only unfrozen symbols require sorting and selecting functions, at most $3|\mathcal{A}|S=\frac{3}{2}|\mathcal{A}|(\log_2L+m)(\log_2L+m+1)$
CLKs are needed for $q$-ary polar decoder to sort and select paths. For binary decoder, the number of unfrozen bits is $|\mathcal{B}|$. The CLKs required by SC-decoder and SCL-decoder for binary and $q$-ary polar codes are tabulated in Table \ref{tab:1} for comparison.
\begin{table}[!t]\scriptsize
	\centering
	\renewcommand{\multirowsetup}{\centering}
	\renewcommand\arraystretch{1.96}
	\caption{Clock Cycles Comparison}
	\begin{tabular}{|p{1.16cm}<{\centering}|p{2.36cm}<{\centering}|p{3.96cm}<{\centering}|}
		\hline
		Polar codes&SC-decoder&SCL-decoder\\
		\hline
		$q$-ary&$(m+1)(n-1)$&$(m+1)(n-1)+\frac{3}{2}|\mathcal{A}|(\log_2L+m)(\log_2L+m+1)$\\
		\hline
		Binary&$2^{\lceil\log_2{m}\rceil+1}n-2$&$2^{\lceil\log_2{m}\rceil+1}n-2+\frac{3}{2}|\mathcal{B}|(\log_2L+1)(\log_2L+2)$ \\
		\hline
	\end{tabular}
	\label{tab:1}
\end{table}

\textbf{Example 6}: In Fig. \ref{fig:18}, with 8-bit CRC, for $N=512$, $|\mathcal{B}|$ is $264$, and $|\mathcal{A}|$ is $135$ for $\mathbb{F}_4$, and $68$ for $\mathbb{F}_{16}$, thus $8942$, $6096$ and $5742$ CLKs are needed for binary, 4-ary, 16-ary polar codes with SCL decoder, respectively, and for $N=2048$, $|\mathcal{B}|$ is $1032$, and $|\mathcal{A}|$ is $523$ for $\mathbb{F}_4$, and $264$ for $\mathbb{F}_{16}$, thus $35054$, $23562$ and $22216$ CLKs are required for binary, 4-ary, 16-ary polar codes with SCL decoder, respectively. It can be seen that the latency of the proposed nonbinary polar codes is lower than that of binary cases, and with the increase of the code length $n$, more CLKs can be saved by nonbinary cases. Furthermore, the latency can be decreased with the increase of the field order.
\subsection{Decoding Complexity}
The main decoding complexity of MR based nonbinary polar codes is shown in Table \ref{tab:2}. For better comparison, the complexity of list decoding, proposed by Vardy for binary polar codes, is also provided in Table \ref{tab:2}, where $N=2^{\lceil\log_2{m}\rceil} n$. The multiplication of two elements over the finite field is equal to the addition of their exponents. Therefore, $2q^2-q$ additions are required for a unit. Note that $q=2^m$, thus, the complexity of XOR operations on GF($q$) is $m$ times that on GF($2$). Assume that mergesort is used for the comparisons among conditional probabilities, the complexity of which for nonbinary polar codes is about $\mathcal{O}(qL(\log_2qL))$ and for binary cases is about $\mathcal{O}(2L(\log_22L))$.

\begin{table}[!t]\scriptsize
\renewcommand\arraystretch{1.26}
\centering
\caption{Decoding Complexity of SCL Decoder}
\begin{tabular}{|c|c|c|c|}
	\hline
	Polar codes&Multiplications&Additions&XORs \\
	\hline
	$q$-ary&$\frac{{({q^2} + q)Ln{{\log}_2}n}}{2}$&$({q^2}-\frac{q}{2})Ln{\log _2}n$&$\frac{{m({q^2} + q)Ln{{\log }_2}n}}{2}$\\
	\hline
	Binary&$3LN{\log_2}N$&$LN{\log_2}N$&$3LN{\log_2}N$ \\
	\hline
\end{tabular}
\label{tab:2}
\end{table}

It can be seen that with the increase of the field order, the decoding complexity increased intensively, and there is a trade-off between the complexity and the latency.

\section{Combination with High-order Modulation}
In this section, the performance of two-stage polarization-based nonbinary polar codes combined with high-order modulations is considered, where two schemes are introduced: \textit{single level coded modulation scheme} and \textit{mixed multilevel coded modulation scheme}. The first one takes account of the latency, while the second one can provide lower complexity with the use of smaller fields. For the improved error-correcting performance, CRC-aided SCL method is considered for all schemes.

\subsection{Single Level Coded Modulation with Field Matched Modulation Order}
The system model under consideration is shown in Fig. \ref{fig:13}. Suppose that a two-dimensional signal constellation $\mathcal{X}$ of size $|\mathcal{X}|=2^m=q$ is used. A signal mapper $\mathcal{M}(\cdot)$ maps the polar coded symbols ${c_i, 1\leq i\leq n}$ to modulated symbols ${x}_i\in \mathcal{X}$, directly, i.e., ${x}_i=\mathcal{M}({c}_i)$.
For finite-length polar coding, the polarization effect will not so perfect, and the order of bit-channel (symbol-channel) reliabilities, i.e., code construction, plays an important role in error-correcting performance, which will be affected by the mapping function $\mathcal{M}(\cdot)$, namely, the constellation labeling.

Let $\mathbf{\tilde c}=(c^{(1)},\ldots,c^{(m)})$ denote a binary label vector, and $X,Y$ be random variables corresponding to their lowercase versions. Since there is a one-to-one correspondence between the $X$ and $c^{(1)},\ldots,c^{(m)}$, the proposed polar coded modulation channel can be equivalently converted into $m$ equivalent subchannels in parallel by the chain rule of mutual information, i.e.,
\begin{equation}
\begin{aligned}
{I}({X};{Y})&={I}(c^{(1)},\ldots,c^{(m)};{Y})\\
&={I}(c^{(1)};{Y})+{I}(c^{(2)};{Y}|c^{(1)})+\cdots\\
&+{I}(c^{(m)};{Y}|c^{(1)},\ldots,c^{(m-1)}).
\end{aligned}
\end{equation}
Denote $\mathbf{\tilde{c}}_{1}^{i}=(c^{(1)},\ldots,c^{(i)})$ ($1\leq i \leq m$). Then, the mutual information of the equivalent subchannel $i$ can be calculated as
\begin{equation}
\begin{aligned}
&{I}(c^{(i)};{Y}|\mathbf{\tilde{c}}_{1}^{i-1})={I}(\mathbf{\tilde{c}}_{i}^{m};{Y}|\mathbf{\tilde{c}}_{1}^{i-1})-{I}(\mathbf{\tilde{c}}_{i+1}^{m};{Y}|\mathbf{\tilde{c}}_{1}^{i-1}).
\end{aligned}
\end{equation}
Thus, the corresponding capacity of each equivalent subchannel can be obtained by
\begin{equation}\small
{C_i} = \int\limits_{y \in {{Y}}}\frac{1}{|\mathcal{X}|}{\sum\limits_{\mathbf{\tilde{c}}_{1}^{m} \in {\{0,1\}^m}} {{P_{Y|X}}(y|\mathbf{\tilde{c}}_{1}^{m})\log\frac{{{P_{Y|X}}(y|\mathbf{\tilde{c}}_{1}^{i})}}{{P_{Y|X}}(y|\mathbf{\tilde{c}}_{1}^{i-1})}}}dy,
\end{equation}
where $P_{Y|X}$ denotes the channel transition probability with input $X$ and output $Y$. The different subchannel capacities ${C_i}$ can also be regarded as a kind of polarization, which implies that the bit-channel in a coded symbol with high reliability should correspond to the subchannel with high capacity to improve the code construction.

According to the binary polarization matrix $\mathbf{H}_m$ given in Appendix A, bit-channel reliabilities in each symbol are roughly in ascending orders. Therefore, in this paper, we assume that the labeling with ascending subchannel capacities, i.e., $C_1\leq \cdots\leq C_m$, is used, and each coded symbol $c_i$ ($1\leq i\leq n$) is labeled by $\mathbf{b}(c_i)$.

\textbf{Example 7:}
Consider two-stage polarization-based $16$-ary polar codes with $16$-QAM over the AWGN channel, where the equivalent code length $N=2048$, and $R=1/2$. Assume that Gray labeling is applied.

For comparison, the performance of comparable binary polar codes, MR-based nonbinary polar codes with \textit{symbol-level} computation are also shown in Fig. \ref{fig:8}. Also considered for comparison are the performance of two-stage polarization-based nonbinary polar codes with the Gray labeling with ascending/descending subchannel capacities, where for labeling with ascending capacities (LAC), $C_1=C_2<C_3=C_4$, and for labeling with ascending capacities (LDC), $C_1=C_2>C_3=C_4$. All nonbinary polar codes are constructed by Monte-Carlo method at $E_b/N_0=3.5$ dB. For binary polar codes, bit-interleaved coded modulation (BICM) scheme~\cite{Kai2013} is considered, where the codes are constructed by Monte-Carlo method at $E_b/N_0=4.5$ dB, and the bit-interleaver designed in~\cite{Chen2016} is applied. CRC-16 is used to all above schemes.

\begin{figure}[h]
	\centering
	\includegraphics[width=3.36in]{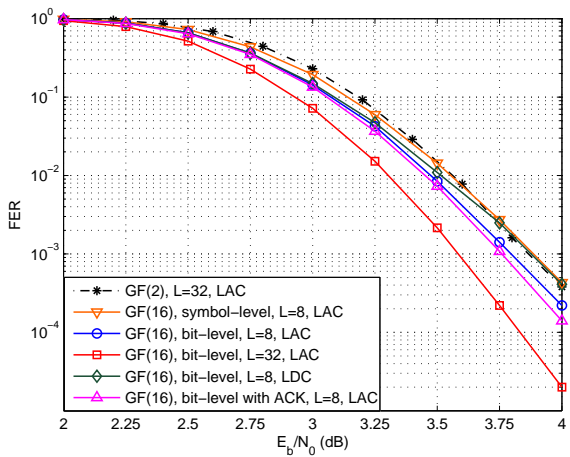} 
	\caption{Performance comparison with 16-QAM over the AWGN channel.}
	\label{fig:8}
\end{figure}

From Fig. \ref{fig:8}, it can be seen that the two-stage polarization-based nonbinary polar codes combined with field matched modulation exhibits better performance than \textit{symbol-level} constructed codes, and the codes with LAC outperform that with LDC. Moreover, with high-order modulation, the performance can also be improved by considering active-check bits. Compared with BICM-based binary polar codes, about $0.32$ dB coding gain can be obtained by our proposed polar codes with the same list size at FER$=4\times10^{-4}$.

\subsection{Mixed Multilevel Coded (MLC) Modulation with Arbitrary Modulation Order}
As shown in Section V, the decoding latency can be decreased with the increase of the field order, while the decoding complexity will be increased intensively. Thus, for large modulation order, the symbol-based coding scheme is not efficient. Assume that $|\mathcal{X}|=2^r$, where $r$ is a positive integer greater than unity. Considering the trade-off between complexity and latency, a mixed polar coded MLC scheme is provided, where each level uses a $q_i=2^{m_i}$-ary polar code as the component code $\mathcal{C}_i$. For simplicity, in this paper, we assume that the code length of all the component codes is equal to $n$, then we have $1\leq i\leq I$ with $r=\sum_{i=1}^{I}m_i$.

The system model is shown in Fig. \ref{fig:19}. A $K$-bit information sequence $\mathbf{m}$ is first fed into a $t$-bit CRC outer encoder, resulting in a binary sequence $\mathbf{d}$ of length $K+t$. Then, the sequence is partitioned into $I$ subsequences, i.e., $\mathbf{d}=(\mathbf{d}^{(1)},\ldots,\mathbf{d}^{(I)})$. Let $K_i$ denote the length of sequence $\mathbf{d}^{(i)}$ with $\sum_{i=1}^{I}K_i=K+t$. Similar to binary polar MLC schemes, the value of $K_i$ is decided by the code construction~\cite{Seidl2013}, i.e., according to the equivalent bit-channel reliabilities via Monte-Carlo method.
Each sequence $\mathbf{d}^{(i)}$ is encoded by an individual polar encoder, producing a codeword $\mathbf{c}^{(i)}=({c}_1^{(i)},\ldots,{c}_{n}^{(i)})$ ($1\leq i\leq I$) of the component code $\mathcal{C}_i$.
Note that the total code rate is $R_t=\frac{K+t}{\sum_{i=1}^{I}m_in}$.

The coded symbols $\mathbf{c}_j=\{c_j^{(i)}, 1\leq i\leq I\}$ at time $j$ ($1\leq j \leq n$) are mapped into the signal constellation $\mathcal{X}$. The process can be regarded as $I$ parallel symbol-based coding schemes with the signal set $\mathcal{X}_i$ for $i$-th coding level, in which $|\mathcal{X}_i|=2^{m_i}$ and $\mathcal{X}=\mathcal{X}_1\times\ldots\times\mathcal{X}_I$.
At the receiver, the component codes $\mathcal{C}_i$ are successively decoded by the SCL decoder. The CRC decoder works after the decoding of all coding levels, and the decoder outputs the estimated information sequence given by the decoding path with the largest probability among the paths which can pass the CRC.
\begin{figure*}[t]
	\centering
	\includegraphics[width=5.0in]{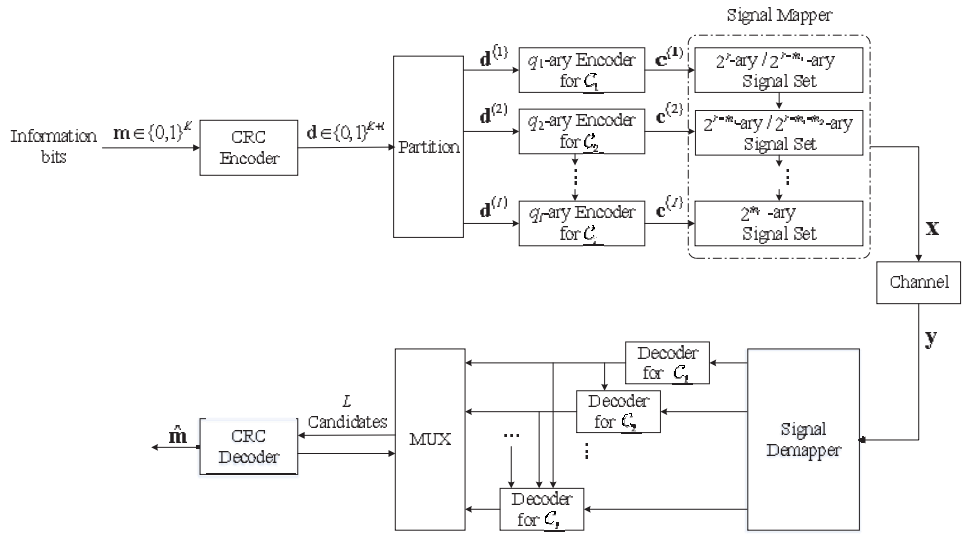}
	\caption{System model of mixed polar coded MLC scheme.}
	\label{fig:19}
\end{figure*}

\begin{figure}[!t]
	\centering
	\includegraphics[width=3.36in]{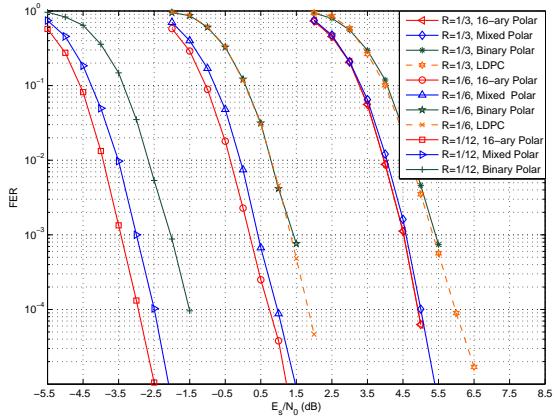}
	\caption{Performance comparison 16-QAM over the AWGN channel.}
	\label{fig:20}
\end{figure}

\textbf{Example 8:}
Consider the performance comparison with $16$-QAM in URLLC, where three low code rates $1/3$, $1/6$, and $1/12$ with $K=171$ are considered. The performance of both nonbinary polar codes with symbol-based coding scheme and mixed multilevel coding scheme are given, where ACK is applied to all schemes, and for the mixed MLC scheme, $I=2$, $q_1=2$, and $q_2=8$. Binary polar coded BICM scheme is also considered in Fig. \ref{fig:20}. All polar codes are constructed by Monte-Carlo method at $E_b/N_0=0.0$ dB, and the CRC-8 with $L=8$ are used to all schemes. For comparison, the performance of LDPC codes with $R=1/3$ and $R=1/6$ are also exhibited in Fig. \ref{fig:20}, where SPA with $50$ iterations is employed. The set partition (SP) labeling is applied to all mixed MLC schemes, and the Gray labeling is applied to the remaining schemes.

In Fig. \ref{fig:20}, although the mixed MLC scheme shows an inferior performance with the SLC scheme, it performs better than both LDPC and binary polar coded BICM scheme, where about $1.0$ dB coding gain can be obtained by the mixed MLC scheme at FER=$1\times 10^{-4}$. In addition, with the increase of the code rate, the gap between the two nonbinary coded schemes is decreased.

\section{Conclusion}
We have presented a new class of nonbinary polar codes constructed with two-stage polarization for URLLC, where the \textit{symbol-level} polarization is based on a $q$-ary kernel, which is a variation of Ar{\i}kan's kernel, and the \textit{bit-level} polarization is considered by using a linear transformation with a designed binary matrix. Simulation results show that the proposed nonbinary polar codes perform better than binary polar codes with BPSK and high-order modulations over the AWGN channel, and exhibit low decoding latency.
As a future work, low complexity decoding algorithms should be investigated for the proposed nonbinary polar codes.

\section*{Acknowledgment}
We wish to thank Mr. Huaan Li and Mr. Zhen Liu for providing the performance of LDPC codes.

\appendix
\subsection{Binary Polarization Matrices for Linear Transformation}
\begin{table}[!h]\footnotesize
	\centering
	\renewcommand{\multirowsetup}{\centering}
	\renewcommand\arraystretch{1.16}
	\caption{Binary Polarization Matrices}
	\begin{tabular}{|c|c|c|}
		\hline		
	     Field Size&Field Order&$\mathbf{H}_m$\\
		\hline
		$q=4$&$m=2$&$\mathbf{H}_2=\left[ {\begin{array}{*{20}{c}}
			1&0\\
			1&1
			\end{array}}\right]$\\
		\hline
		$q=8$&$m=3$&$\mathbf{H}_3=\left[ {\begin{array}{*{20}{c}}
			1&0&0\\
			1&1&0\\
			0&1&1
			\end{array}}\right]$\\
		\hline
		$q=16$&$m=4$&$\mathbf{H}_4=\left[ {\begin{array}{*{20}{c}}
			1&0&0&0\\
			1&0&1&0\\
			1&1&0&0\\
			1&1&1&1\\
			\end{array}}\right]$\\
		\hline
	\end{tabular}
	\label{tab:4}
\end{table}
\subsection{Proof of Proposition 1}

Define an ensemble of random variables $(U_{11},U_{12},U_{21},U_{22},U_1,U_2,V_{11},V_{12},V_{21},V_{22},V_1,V_2,C_{11},C_{12},$
$C_{21},C_{22},C_1,C_2,\tilde Y_{11},\tilde Y_{12},\tilde Y_{21},\tilde Y_{22},Y_1,Y_2)$ so that $(U_{11},U_{12},U_{21},U_{22})$ is uniformly distributed over $\mathbb{F}_2^4$,
$\mathbf{b}(V_1)=(V_{11},V_{12})=\mathbf{b}(\mathcal{T}_2(U_1))=(U_{11}\oplus U_{12}, U_{12})$, $\mathbf{b}(V_2)=(V_{21},V_{22})=\mathbf{b}(\mathcal{T}_2(U_2))=(U_{21}\oplus U_{22}, U_{22})$, $\mathbf{b}(C_1)=(C_{11},C_{12})=\mathbf{b}(V_1+ \alpha V_2)$, $\mathbf{b}(C_2)=(C_{21},C_{22})=\mathbf{b}(V_2)$ and $(Y_1,Y_2)=(f(\tilde Y_{11},\tilde Y_{12}),f(\tilde Y_{21},\tilde Y_{22}))$.

From the fact that $\mathcal{\tilde Y}^m\rightarrow \mathcal{Y}$ is invertible, we have
\begin{equation}
\begin{split}
I(W'_{1})&=I(U_{11};f(Y_{11},Y_{12}),f(Y_{21},Y_{22}))\\
&=I(U_{11};Y_1,Y_2)\\
I(W'_{2})&=I(U_{12};f(Y_{11},Y_{12}),f(Y_{21},Y_{22}),U_{11})\\
&=I(U_{12};Y_1,Y_2,U_{11})\\
I(W''_{1})&=I(U_{21};f(Y_{11},Y_{12}),f(Y_{21},Y_{22}),U_{11},U_{12})\\
&=I(U_{21};Y_1,Y_2,V_1)\\
I(W''_{2})&=I(U_{22};f(Y_{11},Y_{12}),f(Y_{21},Y_{22}),U_{11},U_{12},U_{21})\\
&=I(U_{22};Y_1,Y_2,V_1,U_{21})\\
\end{split}
\end{equation}
Since $U_{11},U_{12},U_{21},U_{22}$ are independent, we have\\
\begin{equation*}
\begin{split}
&{I(U_{12};Y_1,Y_2,U_{11})=I(U_{12};Y_1,Y_2|U_{11})}\\ &{I(U_{21};Y_1,Y_2,V_1)=I(U_{21};Y_1,Y_2|V_1)}\\ &{I(U_{22};Y_1,Y_2,V_1,U_{21})=I(U_{22};Y_1,Y_2|V_1,U_{21})}.
\end{split}
\end{equation*}

So, by the chain rule, we get
\begin{equation}
\begin{split}
&{I(W'_{1})+I(W'_{2})=I(U_{11},U_{12};Y_1,Y_2)=I(W')}\\ &{I(W''_{1})+I(W''_{2})=I(U_{21},U_{22};Y_1,Y_2|V_1)=I(W'')}.\\
\end{split}
\end{equation}
and
\begin{equation}
\begin{split}
&I(W'_{1})+I(W'_{2})+I(W''_{1})+I(W''_{2})\\
&=I(U_{11},U_{12},U_{21},U_{22};Y_1,Y_2)\\
&=I(U_{11},U_{12},U_{21},U_{22};f(Y_{11},Y_{12}),f(Y_{21},Y_{22}))\\
&=I(C_{11},C_{12},C_{21},C_{22};Y_{11},Y_{12},Y_{21},Y_{22})\\
\end{split}
\end{equation}
The proof of (7) is completed by noting that $I(C_{11},C_{12},C_{21},C_{22};Y_{11},Y_{12},Y_{21},Y_{22})=\sum_{i=1}^{2}\sum_{j=1}^{2}$
$I(C_{ij};Y_{ij})=4I(\tilde W)$ and $I(W')+I(W'')=2I(W)$.
Furthermore, similar method in Appendix C in~\cite{Arikan2009} can be use to prove (8) and (9), where by noting that
\begin{equation}
\begin{split}
&I(W'_{2})=I(U_{12};Y_{12})+I(U_{12};Y_{11},Y_{21},Y_{22},U_{11}|Y_{12})\\
&= I(\tilde W)+I(U_{12};Y_{11},Y_{21},Y_{22},U_{11}|Y_{12})\\
&\geq I(\tilde W)\\
&I(W''_{2})=I(U_{22};Y_{22})+I(U_{22};Y_{11},Y_{12},Y_{21},U_{11},U_{12},U_{21}|Y_{22})\\
&= I(\tilde W)+I(U_{22};Y_{11},Y_{12},Y_{21},U_{11},U_{12},U_{21}|Y_{22})\\
&\geq I(\tilde W).
\end{split}
\end{equation}

\balance

\end{document}